\documentclass[sigconf, nonacm]{acmart}

\usepackage{amsmath,amsfonts}
\usepackage{graphicx}
\usepackage{textcomp}
\usepackage[usenames, dvipsnames]{xcolor}
\usepackage{booktabs}
\usepackage{pgfplots}
\usepackage{subcaption}
\usepackage{colortbl}
\usepackage{bbding}
\usepackage{braket}
\usepackage{epsfig,latexsym,graphics}
\usepackage[font={small}]{caption}
\usepackage{setspace}
\usepackage{enumitem}
\pgfplotsset{compat=1.16}
\usepackage{cmd}
\usepackage{svg}
\usepackage[a-2b]{pdfx}

\newcommand\vldbdoi{XX.XX/XXX.XX}
\newcommand\vldbpages{XXX-XXX}
\newcommand\vldbvolume{14}
\newcommand\vldbissue{1}
\newcommand\vldbyear{2020}
\newcommand\vldbauthors{\authors}
\newcommand\vldbtitle{\shorttitle} 
\newcommand\vldbavailabilityurl{https://github.com/mingyu-hkustgz/CubeGraph}
\newcommand\vldbpagestyle{plain}

\begin{document}

\title{CubeGraph: Efficient Retrieval-Augmented Generation for Spatial and Temporal Data}

\author{Mingyu Yang}
\affiliation{%
  \institution{HKUST (GZ) \& HKUST}
  \country{China}
}
\email{myang250@connect.hkust-gz.edu.cn}

\author{Wentao Li}
\affiliation{%
\institution{University of Leicester}
  \country{United Kingdom}
}
\email{wl226@leicester.ac.uk}

\author{Wei Wang}
\affiliation{%
  \institution{HKUST (GZ) \& HKUST}
  \country{China}
}
\email{weiwcs@ust.hk}

\renewcommand{\shortauthors}{Trovato et al.}

\begin{abstract}
Hybrid queries combining high-dimensional vector similarity search with spatio-temporal filters are increasingly critical for modern retrieval-augmented generation (RAG) systems. Existing systems typically handle these workloads by nesting vector indices within low-dimensional spatial structures, such as R-trees. However, this decoupled architecture fragments the vector space, forcing the query engine to invoke multiple disjoint sub-indices per query. This fragmentation destroys graph routing connectivity, incurs severe traversal overhead, and struggles to optimize for complex spatial boundaries. In this paper, we propose CubeGraph, a novel indexing framework designed to natively integrate vector search with arbitrary spatial constraints. CubeGraph partitions the spatial domain using a hierarchical grid, maintaining modular vector graphs within each cell. During query execution, CubeGraph dynamically stitches together adjacent cube-level indices on the fly whenever their spatial cells intersect with the query filter. This dynamic graph integration restores global connectivity, enabling a unified, single-pass nearest-neighbor traversal that eliminates the overhead of fragmented sub-index invocations. Extensive evaluations on real-world datasets demonstrate that CubeGraph significantly outperforms state-of-the-art baselines, offering superior query execution performance, scalability, and flexibility for complex hybrid workloads.
\end{abstract}

\maketitle

\pagestyle{\vldbpagestyle}
\begingroup\small\noindent\raggedright\textbf{PVLDB Reference Format:}\\
\vldbauthors. \vldbtitle. PVLDB, \vldbvolume(\vldbissue): \vldbpages, \vldbyear.\\
\href{https://doi.org/\vldbdoi}{doi:\vldbdoi}
\endgroup
\begingroup
\renewcommand\thefootnote{}\footnote{\noindent
This work is licensed under the Creative Commons BY-NC-ND 4.0 International License. Visit \url{https://creativecommons.org/licenses/by-nc-nd/4.0/} to view a copy of this license. For any use beyond those covered by this license, obtain permission by emailing \href{mailto:info@vldb.org}{info@vldb.org}. Copyright is held by the owner/author(s). Publication rights licensed to the VLDB Endowment. \\
\raggedright Proceedings of the VLDB Endowment, Vol. \vldbvolume, No. \vldbissue\ %
ISSN 2150-8097. \\
\href{https://doi.org/\vldbdoi}{doi:\vldbdoi} \\
}\addtocounter{footnote}{-1}\endgroup

\ifdefempty{\vldbavailabilityurl}{}{
\vspace{.3cm}
\begingroup\small\noindent\raggedright\textbf{PVLDB Artifact Availability:}\\
The source code, data, and/or other artifacts have been made available at \url{\vldbavailabilityurl}.
\endgroup
}

\section{Introduction}
Modern unstructured data, such as text, code, images, user profiles, and videos, is intrinsically linked with spatial and temporal metadata, such as geolocations, timestamps, and trajectories~\cite{spatio-temporal-survey-SIGMOD-2022}. High-dimensional vector embeddings have become the standard foundation for retrieval augmented generation over such data, mapping each object into a latent space where semantic or visual relevance is measured via inner product or Euclidean distance. Beyond pure similarity search, emerging applications increasingly demand hybrid queries that jointly evaluate vector similarity alongside complex spatial and temporal predicates. In practice, these predicates extend far beyond simple bounding boxes; they require retrieval engines to satisfy constraints over irregular geographic polygons, spatial intersections, temporal windows, and dynamically changing spatio-temporal conditions. While supporting such hybrid queries enables richer information retrieval, it introduces severe challenges for efficient query processing. We illustrate this need through the following motivating examples.

\stitle{Motivating Example 1: Urban event retrieval.}
A city management platform stores geo-tagged reports, images, and videos, each represented by a vector embedding and associated with a timestamp and location (Fig.~\ref{fig:Motivation}). Given a query such as \textit{flooded streets}, an analyst may search for semantically similar records within a specified time window, but strictly limited to objects located inside an irregular flood-impact region. This query necessitates the seamless integration of vector similarity search with complex spatial and temporal filters.

\stitle{Motivating Example 2: Regional multimedia retrieval.}
A multimedia database manages geo-tagged images and videos alongside their vector embeddings and timestamps (Fig.~\ref{fig:Motivation}). Given a query image, a user may search for the most similar objects captured during a specific time interval, constrained within a complex polygonal region and excluding restricted subregions. This requires jointly processing vector similarity with non-rectangular spatial filtering and temporal constraints.

\begin{figure}[!t]
    \centering
    \includegraphics[width=\linewidth]{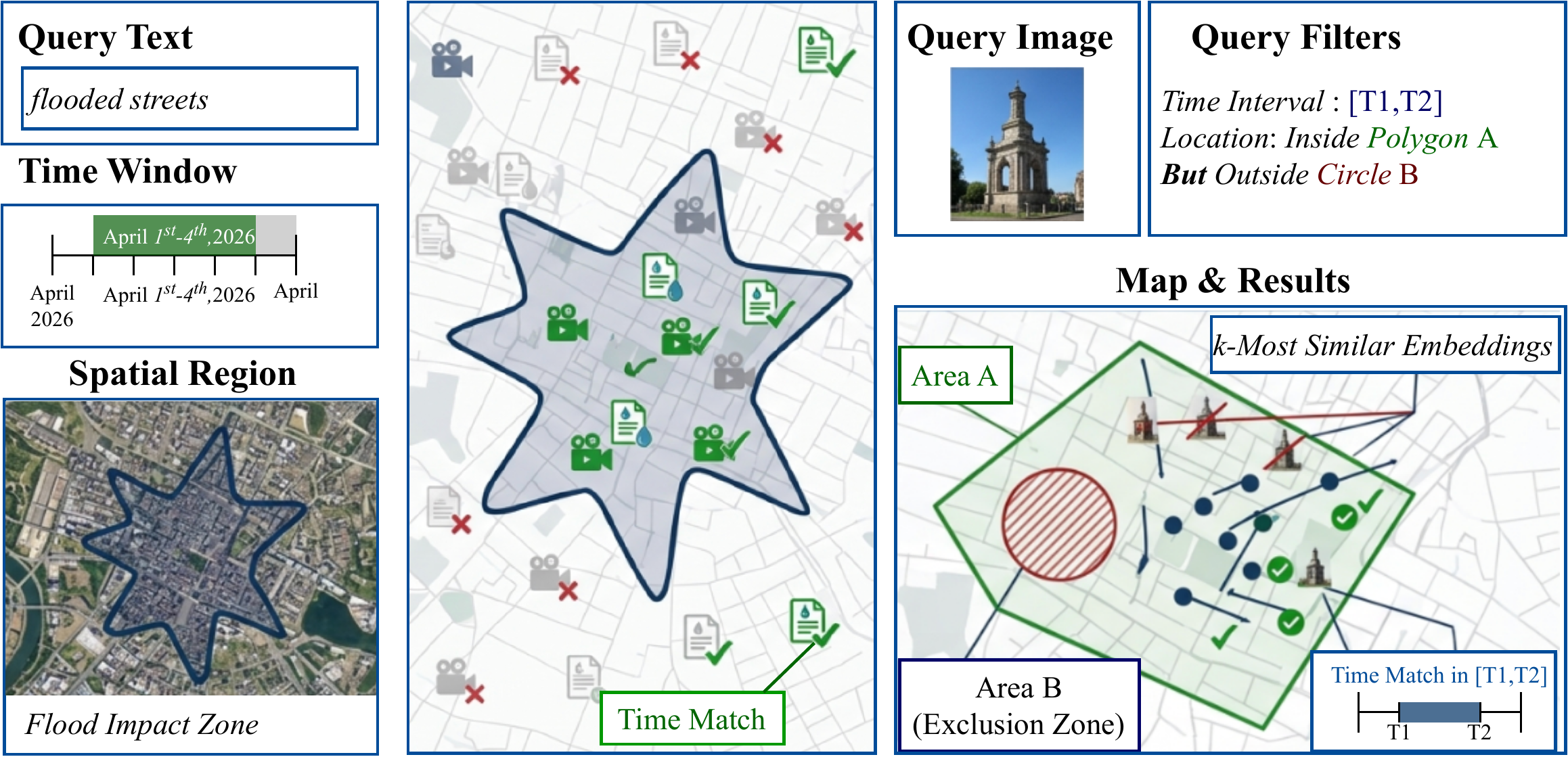}
    \caption{Motivating examples for hybrid vector similarity search with spatio-temporal filters.
    Each data object is represented by a high-dimensional vector embedding \(\mathbf{x}_i\) and associated with spatio-temporal metadata \(\mathbf{s}_i\) (e.g., geolocation and timestamp).
    (1) Urban event retrieval: a city management platform stores geo-tagged reports, images, and videos. Given a query vector \(\mathbf{q}\) (e.g., \textit{flooded streets}) and a spatio-temporal filter \(\phi\), the task is to retrieve the top-\(k\) most similar objects satisfying \(\phi(\mathbf{s}_i) = 1\).
    (2) Regional multimedia retrieval: a multimedia database stores geo-tagged images and videos. Given a query image \(\mathbf{q}\) and a complex polygonal filter \(\phi\), the system must return the top-\(k\) nearest neighbors within the specified spatio-temporal constraints.}\label{fig:Motivation}
\end{figure}

Existing database solutions typically model this as a filtered approximate \(k\)-nearest neighbor (AKNN) search problem. Since the exact result is hard to obtain in the high-dimensional space due to the curse of dimensionality~\cite {Curse-of-dim-1998}.  Among various vector index~\cite{OPQ-PAMI-2014,PQ-fast-scan-VLDB-2015,LSH-1999,AQ-2014-cvpr,Rabitq-SIGMOD-2024,MRQ-RESQ-arxiv-2024-Mingyu,PDX-SIGMOD-2025}, graph-based vector indices (e.g., HNSW) offer superior naive AKNN query efficiency~\cite{ann-benchmakrs,NSG-VLDB-2019-deng-cai,HNSW-PAMI-2020,Acorn-SIGMOD-2024,ANNSurvey-TKDE-2020-Wei-Wang,HVS-VLDB-2021-kejing-lu,tMRNG:journals/pacmmod/PengCCYX23,CPG-SIGMOD-2026-Shangqi-Lu,Diskann-NIPS-2019,DEG-SIGMOD-2025,PDX-SIGMOD-2025,HQI-SIGMOD-2023,Graph-ANNS-Survey-VLDB-2021-mengzhao,Starling-SIGMOD-2024-Mengzhao,Graph-Index-Accelearte-SIGMOD-2025-Mengzhao,ADSampling:journals/sigmod/GaoL23,BSA-DDC-ICDE-2024-Mingyu}, most existing methods attempt to adapt the graph structure or traversal process to accommodate metadata filters~\cite{Acorn-SIGMOD-2024,AnalyticDB-VLDB-2020,iRangeGraph-SIGMOD-2025,ESG-arxiv-2025-Mingyu,Filtered-diskann-WWW-2023,Window-Filter-ICML-2024,SeRF-SIGMOD-2024,ELI-VLDB-2026-Mingyu,RangePQ-SIGMOD-2025-fangyuan-sibo,Wow-Range-SIGMOD-2025,DIGRA-SIGMOD-2025-sibo-menxu-cuhk,Beyond-Vector-Search-Jiadong-Xie-SIGMOD-2025,VSAG-arxiv-2025,Milvus-SIGMOD-2021,E2E-Arxiv-2026-Mingyu,UNG-SIGMOD-2025,Vector-database-Liguoliang-2024-SIGMOD}. A standard graph index navigates the high-dimensional space by routing through proximity-based neighbor connections until converging on the $k$-nearest neighbors of query. Early execution paradigms, namely $\PRE$ and $\POST$, leave the underlying graph topology unchanged but modify the traversal logic. $\PRE$ prunes nodes failing the predicate during traversal, which severely degrades graph connectivity and search accuracy under low filter selectivity. Conversely, $\POST$ traverses the graph ignoring the filter and verifies predicates post-hoc, leading to massive redundant distance computations and degraded efficiency.

\stitle{Challenge.}
To overcome the limitations of single-graph paradigms, state-of-the-art methods adopt a decoupled architecture, nesting multiple vector graphs within a low-dimensional spatial index. For instance, to handle 1D metadata (e.g., a time window), methods like WindowFilter~\cite{Window-Filter-ICML-2024}, iRange~\cite{iRangeGraph-SIGMOD-2025}, ESG~\cite{ESG-arxiv-2025-Mingyu}, and WoW~\cite{Wow-Range-SIGMOD-2025} build segment-tree-like structures where each tree node maintains a separate graph index. Consequently, answering a range filter query introduces an \(O(\log N)\) multiplicative overhead to the query execution time, alongside an \(O(N \log N)\) space complexity since the graph index has linear complexity.
In spatio-temporal scenarios, metadata dimensionality inherently exceeds 1D (e.g., 2D geolocation plus a 1D timestamp). Furthermore, spatial filters are rarely regular rectangles; they are often complex polygons or circles. To handle multi-dimensional metadata, a straightforward extension is to organize graph indices using multi-dimensional trees, such as R-trees or KD-trees~\cite{Mesh-VLDBJ-2025,KHI-arxiv-2026}. During query execution, the engine traverses the tree and invokes the graph indices attached to the overlapping tree nodes, subsequently merging the results~\cite{Mesh-VLDBJ-2025}. However, even with only 2D metadata, a KD-tree requires \(O(\sqrt{N})\) tree nodes to reconstruct a given 2D rectangular filter. This architectural decoupling forces the query engine to invoke a massive number of disjoint graph subqueries. This subquery explosion destroys the global routing connectivity of the vector space, severely limiting the search performance, flexibility, and scalability of existing tree-graph methods.

\stitle{Our Idea.}
In this paper, we propose \(\CG\), a novel index structure and search paradigm designed to efficiently process hybrid queries with arbitrary spatio-temporal filters. Our fundamental insight is that a graph remains navigable across multi-dimensional boundaries as long as the underlying index bounds its adjacent neighbors and supports dynamic connectivity among them. To this end, \(\CG\) constructs a hierarchical grid over the spatio-temporal metadata space, mapping data points into localized, modular cube indices. By leveraging multiple levels of spatial granularity, the hierarchical grid guarantees that the number of involved graph indices is tightly controlled, effectively eliminating the exponential subquery explosion inherent to KD-tree or R-tree-based methods.

Crucially, rather than treating these cubes as isolated search domains, \(\CG\) introduces a lightweight, on-the-fly graph stitching mechanism. During query execution, as the search algorithm identifies the bounded set of cubes intersecting the query filter, it dynamically links the nodes to adjacent cubes, achieving a merged graph index. This creates a unified, query-specific routing graph in real-time. This dynamic integration avoids the massive redundant distance computations of \(\POST\), the graph disconnection issues of \(\PRE\), and the traversal overhead of decoupled tree-graph architectures. In addition, users can adjust the number of layers and granularity of $\CG$ according to the query workload to further improve efficiency.

\begin{table}[!t]
\begin{small}
  \centering
\caption{Comparison of \(\CG\) with existing filtered vector search paradigms.
\underline{Performance} measures overall accuracy and query efficiency under complex spatial constraints.
\underline{Compact} indicates whether the method avoids the index size explosion when handling multi-dimensional spatial filters.
\underline{Connectivity} captures the ability to maintain a unified, well-connected graph during search, avoiding fragmentation.
\underline{Flexibility} assesses support for arbitrary spatio-temporal filter shapes (e.g., polygons, irregular regions).
}\vspace{-2ex}
\resizebox{8.5cm}{!}{
\begin{tabular}{l|c|cccc}
    \toprule
    Feature       & \(\CG\) (Our)         & \(\POST\)          & \(\ACORN\)       & Tree-Graph    \\
    \midrule
    Performance   & \(\star\star\star\) & \(\star\)        & \(\star\)      & \(\star\star\)  \\
    Compact        & \checkmark          & \checkmark       & \checkmark     & \(\times\)    \\
    Connectivity   & \checkmark          & \checkmark       & \(\times\)     & \(\times\)    \\
    Flexibility    & \checkmark          & \checkmark       & \checkmark     & \(\times\)    \\
    \bottomrule
  \end{tabular}}
  \label{tab:features}
\end{small}
\end{table}

\stitle{Contributions.}
We summarize our main contributions as follows:

\sstitle{Problem Analysis.}
We conduct an in-depth analysis of vector similarity search under complex spatio-temporal constraints. We identify that the multi-dimensionality of metadata and the geometric complexity of query filters lead to a subquery explosion in existing decoupled architectures, acting as the primary bottleneck for scalability and flexibility.

\sstitle{The \(\CG\) Framework.}
Based on the insight that dynamically stitched sub-graphs can achieve routing performance comparable to a monolithic index, we propose \(\CG\). This hierarchical grid index framework natively supports complex spatial query filters, offering high query execution efficiency while maintaining a strictly bounded index space overhead.

\sstitle{Efficient Query Processing Strategies.} We design two optimized query execution strategies: a predetermined cube search for simple boundaries, and an on-the-fly merged search for complex, irregular geometries. This adaptive approach guarantees a bounded search space and ensures high throughput across diverse spatial filters.

\sstitle{Extensive Evaluation.}
We design a comprehensive suite of experiments utilizing real-world and synthetic datasets under diverse query workloads. Our evaluations demonstrate that \(\CG\) achieves highly stable performance across varying spatial constraints, delivering up to a 5\(\times\) speedup over state-of-the-art baselines.

Due to space constraints, some proofs and experiments are omitted, which can be found in our technical report~\cite{technicalreport}.  

\section{Preliminary}\label{sec:preliminary}

In this section, we formally define the hybrid search problem over vector similarity and spatio-temporal metadata in \S~\ref{sec:prob-defn}. Following the definitions, we review related work on existing solutions in \S~\ref{sec:prior-works}. Finally, we introduce the index merging techniques utilized within our framework in \S~\ref{sec:merge}.

\subsection{Problem Definition}\label{sec:prob-defn}

We first introduce the basic notations and then formally define the core problem studied in this paper.

\stitle{Definition of a Data Point}.
A \emph{data point} is defined as a tuple \(o=(\mathbf{x}, \mathbf{s})\), where \(o\) is a unique object identifier, \(\mathbf{x} \in \mathbb{R}^d\) is its \(d\)-dimensional vector embedding, and \(\mathbf{s} \in \mathbb{R}^m\) represents its \(m\)-dimensional spatio-temporal metadata (e.g., longitude, latitude, timestamp).

\stitle{Definition of a Dataset}.
A \emph{dataset} is a collection \(\mathcal{D} = \{( \mathbf{x}_i, \mathbf{s}_i)\}\) comprising \(i\in [1,N]\) data points.

\stitle{Definition of a Spatio-Temporal Filter}.
A \emph{spatio-temporal filter} is a predicate \(\phi: \mathbb{R}^m \to \{0,1\}\) defined over the metadata space. A data point \(o_i\) \emph{satisfies} \(\phi\) if and only if \(\phi(\mathbf{s}_i) = 1\). This filter can represent axis-aligned rectangles, complex polygons, temporal windows, or any arbitrary intersection and union of these constraints.

\stitle{Definition of Filtered Candidate Set}
Given a dataset \(\mathcal{D}\) and a filter \(\phi\), the \emph{filtered candidate set} is defined as \(\mathcal{D}_\phi = \{(\mathbf{x}_i, \mathbf{s}_i) \in \mathcal{D} \mid \phi(\mathbf{s}_i) = 1\}\).

\stitle{Hybrid AKNN Query}
Given a query vector \(\mathbf{q} \in \mathbb{R}^d\), a spatio-temporal filter \(\phi\), and an integer \(k\), a \emph{hybrid approximate \(k\)-nearest neighbor (AKNN) query} returns a result set \(\mathcal{R} \subseteq \mathcal{D}_\phi\) with \(|\mathcal{R}| = k\), such that the vectors in \(\mathcal{R}\) are approximately the \(k\) closest to \(\mathbf{q}\) under Euclidean distance (or inner product) among all valid points in \(\mathcal{D}_\phi\).

\noindent Table~\ref{tab:notation} summarizes the key notations used throughout this paper.

\subsection{Existing Solutions}\label{sec:prior-works}

We briefly review graph-based approximate nearest neighbor search (ANNS) methods and existing strategies for handling filters.

\stitle{Graph-based ANNS.}
Proximity graph methods represent the dataset as a graph \(G = (V, E)\), where each node \(v \in V\) corresponds to a data point, and edges connect approximate nearest neighbors in the high-dimensional vector space. Query processing typically follows a greedy beam search: starting from a fixed entry node, the algorithm iteratively expands to the neighbors closest to \(\mathbf{q}\), terminating when a local minimum is reached. For example, $\HNSW$~\cite{HNSW-PAMI-2020} organizes nodes into a hierarchy of layers, achieving \(O(d \log N)\) query time with \(O(N)\) space complexity by perform edge occlusion. Subsequent work, such as NSG and tMNG, improved the occlusion strategy to further enhance efficiency~\cite{NSG-VLDB-2019-deng-cai,tMRNG:journals/pacmmod/PengCCYX23,CPG-SIGMOD-2026-Shangqi-Lu,Diskann-NIPS-2019,HVS-VLDB-2021-kejing-lu}.

\stitle{Pre-filtering and Post-filtering.}
The straightforward method to incorporate a filter \(\phi\) into graph search is Pre-filtering and Post-filtering~\cite{Filtered-diskann-WWW-2023}. \(\PRE\) actively skips nodes where \(\phi(\mathbf{s}_i) = 0\) during graph traversal. However, when filter selectivity is low (i.e., \(|\mathcal{D}_\phi| < |\mathcal{D}|\)), the effective routing subgraph becomes severely sparse and disconnected, leading to catastrophic recall degradation. Conversely, \(\POST\) traverses the full graph, ignoring the filter, applying \(\phi\) only to the final retrieved candidate set. While this preserves recall, it wastes massive computational resources calculating distances for unqualified nodes, becoming prohibitively slow when filter selectivity is high.

\stitle{Filtered Graph Index Methods.}
To overcome the limitations of naive filtering, several methods modify the graph structure to natively support predicates. Filtered-DiskANN~\cite{Filtered-diskann-WWW-2023} builds per-label subgraphs and stitches them together, though it is primarily designed for categorical label filters. $\NHQ$~\cite{NHQ-NIPS-2022-mengzhao-wang} constructs a hybrid graph encoding both vector proximity and attribute proximity edges, supporting structured and unstructured constraints. $\ACORN$~\cite{Acorn-SIGMOD-2024} augments $\HNSW$~with a dynamic neighbor expansion strategy during search to maintain connectivity under arbitrary predicates. However, these methods are primarily optimized for low-dimensional or categorical metadata and struggle to scale efficiently when confronted with multi-dimensional spatio-temporal filters, the critical gap we address in \S~\ref{sec:problem}.

\begin{table}[!t]
  \caption{Summary of Notations}\label{tab:notation}\vspace{-2ex}
  \small
  \begin{tabular*}{\linewidth}{@{\extracolsep{\fill}} p{18mm} | p{68mm}}
    \toprule
    Notation & Description \\
    \midrule
    \(\mathcal{D}\) & Dataset of \(N\) data points \\
    \(N\) & Number of data points in the dataset \\
    \(d\) & Dimensionality of vector embeddings \\
    \(m\) & Dimensionality of spatio-temporal metadata \\
    \(\mathbf{x}_i\) & Vector embedding of data point \(o_i\) \\
    \(\mathbf{s}_i\) & Spatio-temporal metadata of data point \(o_i\) \\
    \(\mathbf{q}\) & Query vector \\
    \(\phi\) & Spatio-temporal filter predicate \\
    \(k\) & Number of nearest neighbors to retrieve \\
    \(\mathcal{D}_\phi\) & Filtered candidate set satisfying \(\phi\) \\
    \bottomrule
  \end{tabular*}
\end{table}

\subsection{Index Merging}\label{sec:merge}
Vector index merging is a critical technique with wide applications in distributed systems, disk-based solutions, and heterogeneous computing architectures~\cite{Diskann-NIPS-2019,VSAG-arxiv-2025}. In scenarios where memory constraints prevent building a monolithic index from scratch across the entire dataset, systems must rely on constructing smaller sub-indices. Index merging techniques enable the consolidation of these pre-built sub-indices to achieve near-optimal search performance. However, naively querying a large number of disjoint sub-indices significantly degrades search efficiency.

To address this, overlapping-based methods, such as those used in DiskANN, force data points to be assigned to multiple partitions to maintain connectivity. More recent methods like FGIM~\cite{FGIM-Arxiv-2026-Antgroup} utilize an enhanced NN-Descent algorithm for fast index merging. Similarly, RNSM~\cite{RNSM-Arxiv-2026-Mingyu} identifies the nearest neighbors of partitioned data in other partitions—a crucial factor in determining the quality of the merged index. It also greedily selects pivots and reuses their search results to accelerate the merging process. Our proposed framework leverages a nearest-neighbor-based index merging approach~\cite{RNSM-Arxiv-2026-Mingyu}, which supports highly flexible merge operations while maintaining a low computational merge cost.

\section{Problem Analysis}\label{sec:problem}

In this section, we analyze the inherent limitations of existing tree-graph hybrid methods for spatio-temporal filtered approximate nearest neighbor search (ANNS) and motivate the design of our proposed framework, \CG.

\stitle{1D Range Filter Methods and Their Limitations.}
When the metadata is a scalar (e.g., a timestamp) and the filter is an interval \([l, r]\), state-of-the-art methods such as SeRF~\cite{SeRF-SIGMOD-2024}, WindowFilter~\cite{Window-Filter-ICML-2024}, iRange~\cite{iRangeGraph-SIGMOD-2025}, ESG~\cite{ESG-arxiv-2025-Mingyu}, and WoW~\cite{Wow-Range-SIGMOD-2025} organize graph indices using a compressed index or segment-tree-like structure. Each tree node covers a contiguous interval of the sorted metadata axis, and a query \([l, r]\) is decomposed into \(O(\log N)\) canonical nodes. The sub-graph of each canonical node is searched independently, and the results are subsequently merged. However, this architecture incurs an \(O(N \log N)\) space complexity (as each point is replicated across \(O(\log N)\) nodes) and requires \(O(k_{\text{sub}} \cdot \log N)\) graph searches per query, where \(k_{\text{sub}}\) is the search budget allocated per sub-graph.

\begin{obs}
For 1D range filters, tree-graph methods incur an \(O(\log N)\) multiplicative overhead in both storage space and query execution cost compared to a single monolithic graph index.
\end{obs}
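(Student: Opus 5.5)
The argument compares a segment-tree-like tree-graph index over $N$ points, built on the sorted metadata axis, against a single monolithic graph index (e.g., $\HNSW$) whose space is $O(N)$ and whose query cost is one graph search with budget $k_{\text{sub}}$. We treat the per-graph search cost as a black-box function $C(n)$ that is nondecreasing in the subgraph size $n$ and at least a constant. The overhead is established in two parts, space and query time, each as an upper bound, followed by a matching lower-bound example so that the overhead is real and not merely an upper estimate.

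\emph{Space.} Assume a balanced binary segment tree over the sorted order with leaves of size $O(1)$.
\begin{itemize}
\item Level $\ell$ of the tree partitions the $N$ points into disjoint intervals.
\item Every point therefore appears in exactly one node per level, so the levels together contain $N$ points each.
\item There are $\lceil \log_2 N\rceil + 1$ levels.
\item Since each node stores a graph linear in its size, the total space is $\sum_\ell O(N) = O(N\log N)$.
\item The monolithic index uses $O(N)$, giving the claimed $\Theta(\log N)$ multiplicative space overhead.
\end{itemize}

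\emph{Query cost.} We use the standard canonical decomposition lemma for segment trees: any interval $[l,r]$ is the disjoint union of at most $2\lceil\log_2 N\rceil$ canonical nodes, at most two per level. The query runs an independent graph search with budget $k_{\text{sub}}$ in each canonical node and merges the candidate lists. The total cost is $\sum_{v} C(|v|) \le 2\log N \cdot C(N) = O(\log N)$ times the cost of one monolithic search. The merge step adds only $O(k_{\text{sub}}\log N)$, which is dominated by the searches.

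\emph{Tightness.} Pick a query interval whose decomposition uses $\Theta(\log N)$ nodes of geometrically varying sizes, for example $[2, N-1]$. Summing $C$ over these nodes gives $\Theta(\log N)$ searches, each costing at least a constant times the budget $k_{\text{sub}}$. So the overhead bound is attained and cannot be improved in general.

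The main obstacle is making the query-cost comparison precise, because "cost of a graph search" is not formally defined in the paper. I would fix it as $\Omega(k_{\text{sub}})$ distance computations per invoked subgraph. This is a lower bound, since each invoked subgraph must be searched at least to budget $k_{\text{sub}}$ to keep the per-node recall. It makes the count of invoked graphs, $\Theta(\log N)$, the quantity that governs the bound, and it is what matches the paper's stated $O(k_{\text{sub}}\log N)$.
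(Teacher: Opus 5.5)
Your proposal is correct and follows essentially the paper's argument, which is given in the paragraph before the observation: each point is replicated in one node per level, giving $O(\log N)$ copies and $O(N\log N)$ space, and a range query splits into $O(\log N)$ canonical nodes that are each searched independently with budget $k_{\text{sub}}$ and then merged. Your extra tightness claim is not needed for the statement and depends on the cost model: the per-node bound $\Omega(k_{\text{sub}})$ certifies a $\Theta(\log N)$ ratio only if the monolithic cost $C(N)$ is itself $O(k_{\text{sub}})$, and if $C$ grows polynomially, for instance $C(n)=n^{\varepsilon}$, the geometrically shrinking canonical nodes of $[2,N-1]$ give $\sum_v C(|v|)=O(C(N))$, so the ratio on that interval is only $O(1)$.
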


\stitle{Multi-Dimensional Spatio-Temporal Filters.}
When the metadata is \(d\)-dimensional (e.g., a 2D geolocation combined with a timestamp yields \(d=3\)), a natural extension is to organize the sub-graph indices using a multi-dimensional spatial tree, such as an R-tree~\cite{R*-tree-SIGMOD-1990} or KD-tree~\cite{Revisit-KD-tree-2019-KDD}. A spatial query filter \(\phi\) (e.g., a rectangle or polygon) is processed by: (1) traversing the tree to identify all leaf nodes overlapping with \(\phi\); (2) executing a graph search on the sub-index of each overlapping leaf; and (3) merging the retrieved results. A fundamental result from computational geometry establishes that the number of KD-tree nodes overlapping a \(d\)-dimensional orthogonal range query is \(\Theta(N^{1-1/d})\) in the worst case.

\begin{obs}
For a 2D spatial filter, a KD-tree-based approach requires \(\Theta(\sqrt{N})\) sub-index invocations per query; for 3D spatio-temporal filters, this complexity grows to \(\Theta(N^{2/3})\). Given a dataset of \(N = 10^6\) points with 2D metadata, such a method invokes \(\sim 10^3\) independent sub-graph searches per query---each carrying its own beam-search initialization overhead---compared to a single graph search in pure ANNS. This \emph{subquery explosion} renders multi-dimensional tree-graph methods fundamentally impractical at scale.
\end{obs}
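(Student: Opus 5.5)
The plan is to prove a $\Theta$ bound, so we need an upper bound and a matching worst-case lower bound on the number of KD-tree nodes whose regions intersect the filter boundary. The two quantities are tied together as follows. Each sub-index invocation corresponds to a \emph{canonical} node, meaning a node whose region lies entirely inside $\phi$ while its parent's region is only partially inside. The number of canonical nodes is at most twice the number of nodes whose region is cut by the boundary of $\phi$, since every canonical node's parent is a cut node. We fix the standard balanced KD-tree on $N$ points in $\mathbb{R}^m$: the splitting axis cycles through the coordinates, each split is at the median, and the depth is $\log_2 N$.

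\textbf{Upper bound.} First take $m=2$ and the case where the query $\phi$ is an axis-parallel rectangle. A rectangle is bounded by four axis-parallel lines, so by a union bound it suffices to count the nodes whose region meets a single vertical line $\ell$. Consider two consecutive levels of the tree, one splitting on $x$ and one on $y$. Of the four grandchildren of a region, $\ell$ meets at most two. This gives the recurrence $Q(n)=2+2Q(n/4)$, which solves to $Q(N)=O(\sqrt N)$.

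For general $m$, group the levels into rounds of $m$ consecutive splits, one per axis. In each round an axis-parallel hyperplane meets at most $2^{m-1}$ of the $2^m$ sub-cells. This gives $Q(n)=2^{m-1}Q(n/2^m)+O(1)$, and since $\log_{2^m}2^{m-1}=1-1/m$, we get $Q(N)=O(N^{1-1/m})$. Summing over the $2m$ facets of the box preserves the order. For $m=2$ and $m=3$ this yields $O(\sqrt N)$ and $O(N^{2/3})$ respectively.

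\textbf{Lower bound.} This is the more delicate half. We use points on a regular grid of side $N^{1/m}$; the median cuts then produce a regular subdivision into congruent cells. Choose a thin box $\phi$ that contains a full slab of points, bounded by two parallel hyperplanes that each pass strictly through the interiors of cells at every level.

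Call a leaf-level cell crossed if its region is cut by the boundary of $\phi$. A hyperplane crosses $(N^{1/m})^{m-1}=N^{1-1/m}$ leaf-level cells. Each crossed leaf has a distinct ancestor just above it, and that ancestor is partially covered. The slab points inside these cells then force $\Omega(N^{1-1/m})$ distinct canonical nodes, each of which triggers its own graph search.

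\textbf{Main obstacle.} The hard part is ensuring that these boundary cells really yield \emph{distinct} sub-index invocations that cannot be merged higher up the tree. The lower-bound slab must be chosen so that no ancestor is fully covered except at the bottom levels. My first attempt is to take the slab exactly one leaf cell thick and offset by half a cell. Then every node whose region meets it is only partially covered down to depth $\log N - O(m)$, so each canonical node has $O(1)$ size. This makes the count of canonical nodes proportional to the slab volume, which is $\Theta(N^{1-1/m})$.

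The $N=10^6$ figure in the statement is then just $\sqrt{10^6}=10^3$, up to the constant factor. The observation that each invocation carries a separate beam-search initialization follows directly from the query procedure, steps (1)–(3), described before the statement.
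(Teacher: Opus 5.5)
Your upper bound is sound, but the lower-bound construction fails as written. With a slab exactly one leaf cell thick and shifted by half a cell, no node of the tree, at any depth, is contained in $\phi$. On your grid every node is at least one leaf cell wide along the slab's normal axis, and its faces lie on the lattice of cell boundaries. Both faces of the slab sit half a cell off that lattice, so the slab contains only one lattice hyperplane and cannot contain both faces of any node. Under your own definition this query therefore has \emph{zero} canonical nodes. The step in which the crossed leaves ``force'' $\Omega(N^{1-1/m})$ canonical nodes conflates cut nodes with canonical ones: a partially covered node need not have any fully covered descendant. So the claim that the canonical count is proportional to the slab volume is false for this slab.

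The repair is to align the slab rather than offset it. Let it contain exactly one layer of leaf cells, with each face at distance $\varepsilon$ outside that layer. A node is then contained in $\phi$ exactly when it lies in that layer and is one unit wide along the normal axis $x_1$, i.e.\ when it sits at or below a child of the last $x_1$-split. The maximal such nodes are those children. Each holds at most $2^{m-1}$ points, and each parent is two units wide and sticks out of the slab. Hence there are at least $N^{1-1/m}/2^{m-1}=\Omega(N^{1-1/m})$ canonical nodes, and maximality itself disposes of your worry about merging higher up. Alternatively, count partially covered leaves as invocations as well: they must be searched with $\phi$ applied, and since they are cut nodes your upper bound still covers them. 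Then the offset slab works immediately, because a single face already crosses $N^{1-1/m}$ leaves.

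For comparison, the paper gives no derivation at all. It invokes the classical computational-geometry fact that an orthogonal range query overlaps $\Theta(N^{1-1/d})$ KD-tree nodes in the worst case, then evaluates $\sqrt{10^6}=10^3$. Your round-by-round recurrence, together with the bound of two canonical nodes per cut node, is a correct proof of the upper half of that fact. Both your argument and the cited fact concern axis-parallel boxes only. The two-of-four step breaks for a slanted polygon edge, which can meet three or four of the grandchildren. So the observation's ``2D spatial filter'' must be read as an orthogonal range for the $\Theta(\sqrt{N})$ claim to be justified.
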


\stitle{Connectivity Degradation Under Filtering.}
Beyond the sheer volume of subqueries, these methods suffer from a secondary structural flaw: severe degradation of graph connectivity. When a filter \(\phi\) exhibits high selectivity, retaining only a small fraction of the total nodes (\(|\mathcal{D}_\phi| \ll N\)), the sub-graphs attached to individual tree nodes may contain very few qualifying points. Consequently, the greedy beam search within each isolated sub-graph becomes highly susceptible to getting trapped in local minima, leading to a drastic drop in recall. This issue is orthogonal to the subquery explosion: even if the number of invoked sub-indices is manageable, the individual sub-graphs often become too sparse to navigate reliably.

\begin{obs}
When filter selectivity is low (i.e., \(|\mathcal{D}_\phi|/N \ll 1\)), tree-graph methods suffer from a compounding effect of (a) an excessively high subquery count and (b) poor intra-subgraph connectivity, both of which severely degrade recall and search efficiency.
\end{obs}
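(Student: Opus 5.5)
The claim in the last observation has two components, (a) subquery count and (b) intra-subgraph connectivity, and I would argue each separately and then combine them.

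For (a), I take the earlier observation on KD-tree overlap as given: a $d$-dimensional orthogonal range query meets $\Theta(N^{1-1/d})$ tree nodes in the worst case. I then show this count does not shrink when selectivity is low. Fix a filter $\phi$ whose region is a thin slab of width $\epsilon$ along one axis and full extent along the others. Then $|\mathcal{D}_\phi|/N = \Theta(\epsilon)$ under a roughly uniform metadata distribution, yet the slab still stabs $\Theta(N^{1-1/d})$ leaf cells, by the standard stabbing-line argument: each level of the tree doubles the number of cells crossed every two levels in 2D. So the number of invocations stays $\Theta(N^{1-1/d})$, and the per-invocation beam-search start-up cost is $\Omega(1)$. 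Consequently the query cost is bounded below by $\Omega(N^{1-1/d})$ regardless of $|\mathcal{D}_\phi|$, while a monolithic search only needs cost polylogarithmic in $N$. The gap therefore widens exactly as selectivity drops.

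For (b), I would use the same slab construction. Every stabbed leaf (or node sub-index) $L$ has only an $O(\epsilon)$ fraction of its points qualifying. Searching $L$'s graph under the filter is then pre-filtering on $L$. I would model each sub-graph as a bounded-degree proximity graph with degree $M$ and retain each node independently with probability $p=\Theta(\epsilon)$. Once $pM < 1$, the retained subgraph is subcritical by a branching-process/percolation argument: components have expected size $O(1)$. A greedy beam search from a fixed entry therefore reaches only $O(1)$ qualifying nodes in expectation, and recall within $L$ is bounded away from 1 unless the beam expands to non-qualifying nodes. Expanding to non-qualifying nodes turns the search into post-filtering on $L$, which costs $\Theta(1/\epsilon)$ times more distance computations per result.

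Combining the two parts gives a trade-off. Either recall degrades through percolation failure in each of the $\Theta(N^{1-1/d})$ sub-indices, or the cost is $\Omega(N^{1-1/d}\cdot \min(|L|, 1/\epsilon))$. That is the compounding effect the statement claims.

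The main obstacle is making (b) rigorous, because proximity graphs are not random and node survival correlates with geometry. My first attempt would assume metadata independent of the embedding, so that filter survival is genuinely i.i.d. Bernoulli($p$) over graph nodes, and state the result under that assumption. Since the statement is an observation rather than a theorem, this conditional, asymptotic argument is the appropriate level of rigor.
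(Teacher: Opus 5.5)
Your decomposition matches the paper's. The paper justifies (a) by the preceding KD-tree overlap observation, and (b) only in prose: sub-graphs ``may contain very few qualifying points,'' so beam search gets trapped.

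\textbf{The gap is in (b).} There you pass from global selectivity to per-subgraph sparsity, and the claim that every stabbed leaf has only an $O(\epsilon)$ fraction of qualifying points is false. In a median-split KD-tree, a leaf of size $B$ has extent about $(B/N)^{1/d}$ along the slab normal. A thin slab therefore keeps a fraction $p_L\approx\min\{1,\epsilon(N/B)^{1/d}\}\gg\epsilon$ of that leaf. As soon as $\epsilon$ exceeds that extent, a full column of leaves (or canonical nodes) lies inside $\phi$ with $p_L=1$. Those are searched unfiltered, with no connectivity loss at all.

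In general, $|\mathcal{D}_\phi|/N\ll 1$ says nothing about the local ratio $p_L=|L\cap\mathcal{D}_\phi|/|L|$, and $p_L$ is what your percolation step needs ($p_LM<1$). To repair it, impose the condition locally:
\begin{itemize}
\item take $\epsilon<(B/N)^{1/d}/M$, so that no node fits inside the slab and every stabbed leaf has $p_L<1/M$;
\item take $B\gg M$, so that a leaf holds $p_LB\gg 1$ qualifying points rather than only the $O(1)$ reachable ones (otherwise the leaf can simply be scanned).
\end{itemize}

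The same slip invalidates your combined bound $\Omega(N^{1-1/d}\min(|L|,1/\epsilon))$. A count of $\Theta(N^{1-1/d})$ requires $B=O(1)$, in which case the $\min$ is $O(1)$ and nothing compounds. For general $B$, the count is $\Theta((N/B)^{1-1/d})$ and the per-leaf factor is $\min\{B,1/p_L\}$. These two are coupled, because (number of stabbed leaves)$\cdot B\cdot p_L\approx|\mathcal{D}_\phi|$; in 2D, $(N/B)^{1/2}/p_L\approx 1/\epsilon$. So (a) and (b) cannot be multiplied as independent factors, and you must state the trade-off in terms of $B$ and $p_L$.

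A smaller point in (a): your sentence that the gap ``widens exactly as selectivity drops'' does not follow. Your lower bound is independent of selectivity, and you compare it with an \emph{unfiltered} monolithic search. What does follow is that the number of invocations exceeds $|\mathcal{D}_\phi|$ itself once $|\mathcal{D}_\phi|=O(N^{1-1/d})$.
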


\stitle{Motivation for $\CG$.}
The aforementioned observations establish a clear design imperative: an ideal system must bound the number of sub-indices invoked per query to \(O(1)\) regardless of the filter shape or dimensionality, while simultaneously preserving global graph connectivity across the filtered subset. $\CG$ addresses both challenges through two core innovations. First, a \emph{hierarchical grid index} partitions the metadata space into cubes at multiple granularities, strictly bounding the number of cubes involved in any query to a small constant per level---independent of \(N\) and the filter dimensionality. Second, a \emph{dynamic graph merging} mechanism fuses the graphs of adjacent cubes on the fly during query execution. This creates a unified routing graph that preserves the navigability and connectivity of the original proximity graph, even under highly restrictive filters. These two components are detailed in \S~\ref{sec:method}.

\section{Methodology}\label{sec:method}

In this section, we present the \CG{} framework in detail. We first introduce the hierarchical grid structure in~\S~\ref{sec:framework}. We then describe index construction in~\S~\ref{sec:index-construct} and query processing in~\S~\ref{sec:query}.

\subsection{CubeGraph Framework}\label{sec:framework}

\begin{figure*}[!t]
    \centering
    \includegraphics[width=\linewidth]{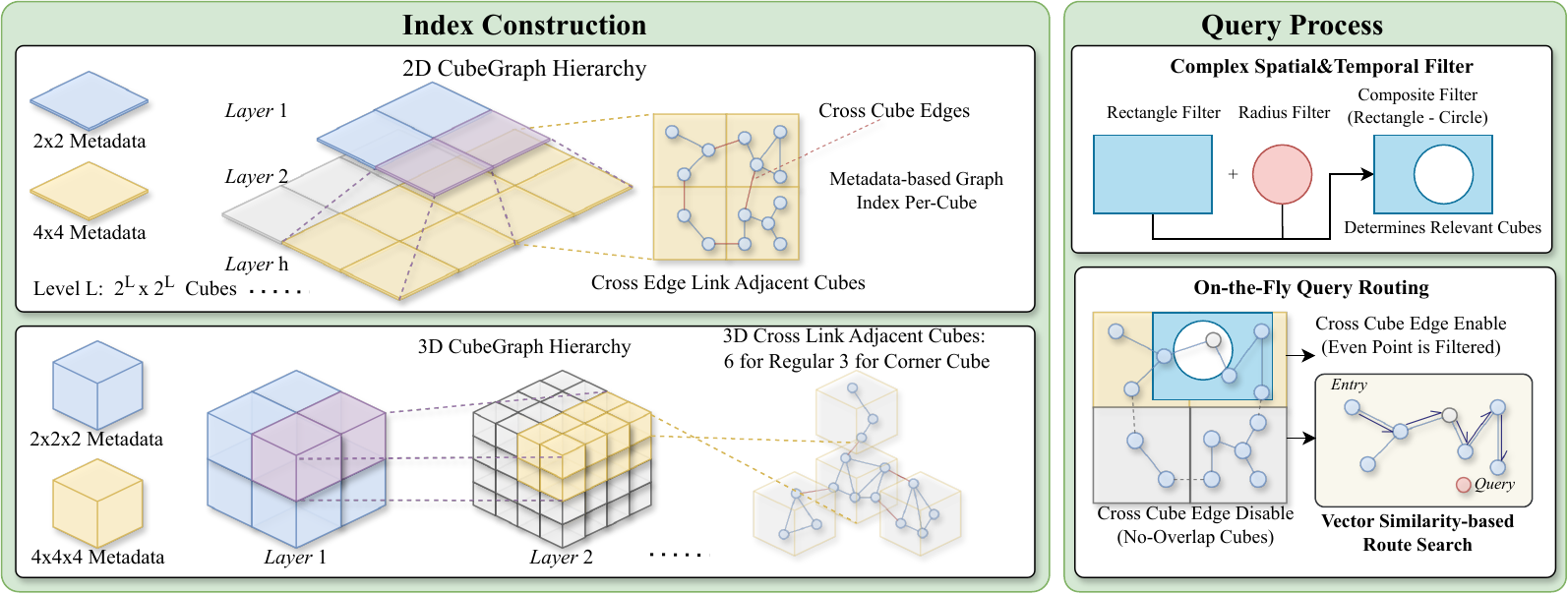}
    \caption{Framework of $\CG$. The hierarchical grid partitions the metadata space into multiple levels of cubes. At each level, data points are assigned to their containing cubes and local graph indexes are built. During query processing, cubes intersecting the filter are identified, and their local graphs are merged on-the-fly via cross-cube connections (dashed lines), forming a unified search graph. Note, we only plot the cross-cube edges of the boundary nodes for simplicity; the $\CG$ index requires that each node link cross-cube edges to the adjacent cube nearest neighbors, whether on the boundary of meta space or not.}
    \label{fig:framework}
\end{figure*}

\stitle{Framework Overview.}
Fig.~\ref{fig:framework} illustrates the $\CG$ framework. The hierarchical grid structure partitions the $m$-dimensional metadata space into multiple levels of granularity. At level $\ell$, the space is divided into $g^\ell$ uniform cubes. Each data point is assigned to exactly one cube per level based on its metadata $\mathbf{s}$. For each cube, we build a local graph index on the vectors of points in that cube. During query processing, given a filter $\phi$, we identify all cubes at each level that intersect $\phi$. For adjacent intersecting cubes, we dynamically add cross-links to adjacent cubes, effectively merging the local graphs into a unified search graph $\mathcal{G}^*$. This on-the-fly merging preserves the navigability of the graph across cube boundaries while keeping the search space bounded by the filter spatial extent.

\subsection{Index Construction}\label{sec:index-construct}
To construct the index $\kw{CI}$ of $\CG$, we proceed in two phases.
In Phase 1, we organize the dataset into a hierarchical grid across multiple layers. 
Each layer partitions the attribute space into disjoint cubes, and a local index (e.g., HNSW) is built for the points within each cube.
In Phase 2, we add cross-cube edges among points within the same layer to ensure global connectivity.
Since cubes are disjoint, both phases can be parallelized to improve efficiency.

\begin{algorithm}[t]
\small
\caption{Hierarchy Construction with Local Indexing}\label{alg:grid-construction}
\KwIn{Dataset $\mathcal{D}$, number of layers $L$, parameters $M$, $ef_{\text{c}}$}
\KwOut{Local index $\kw{CI}^{\text{Local}}$}
Load metadata and compute the global bounding box $\mathcal{B}$\;
\For{$\ell = 0$ \KwTo $L-1$}{
    Set granularity $g_\ell = 2^{\ell+1}$\;
    Partition $\mathcal{B}$ into $(g_\ell)^m$ cubes with side length $w_\ell = |\mathcal{B}| / g_\ell$\;
    \ForEach{$(\mathbf{x}_i, \mathbf{s}_i) \in \mathcal{D}$ \textbf{in parallel}}{
        Compute cube ID $c_i$ and assign point $i$ to cube $c_i$\;
    }
    \ForEach{non-empty cube $c$ \textbf{in parallel}}{
        Build an index over points in $c$ with parameters $M$ and $ef_{\text{c}}$\;
    }
}
\end{algorithm}

\stitle{Phase 1: Hierarchy Construction with Local Indexing.}
Algorithm~\ref{alg:grid-construction} describes Phase 1, which constructs the local index $\kw{CI}^{\text{Local}}$. The output $\kw{CI}^{\text{Local}}$ is a hierarchical structure of cubes across $L$ layers, where each cube maintains its own graph index.
Given a dataset $\mathcal{D}$ with $N$ points, we first load the attribute metadata and compute the global bounding box $\mathcal{B}$ over the attribute space (Line~1). 
We then build $L$ layers. For each layer $\ell \in [0, L-1]$, the attribute space is uniformly partitioned into $(2^{\ell+1})^m$ cubes, where $m$ is the attribute dimensionality (Line~3). 
Each cube has side length $w_\ell = |\mathcal{B}| / 2^{\ell+1}$ per dimension.
Each data point is assigned to its corresponding cube based on its attribute coordinates (Lines~5-6). 
For every non-empty cube $c$ at layer $\ell$, we construct a graph-based index (e.g., HNSW) using standard methods, with parameters $M$ (maximum degree) and $ef_{\text{c}}$ (construction beam width) (Lines~7-8).
Both point assignment and per-cube index construction are independent across cubes and can be parallelized (e.g., via OpenMP) for efficiency. During this process, we also build the cube adjacency structure: each cube has up to $2m$ face-adjacent neighbors, defined as cubes that differ by one unit in exactly one dimension and share an $(m-1)$-dimensional boundary. This adjacency information is used in the subsequent cross-cube edge construction phase.

\stitle{Phase 2: Cross-Cube Edge Addition.}
Note that the local index $\kw{CI}^{\text{Local}}$ consists of isolated indexes associated with cubes across the $L$ layers.
To establish global connectivity within each layer, we proceed to Phase 2, as described in Algorithm~\ref{alg:cross-edges}.
For each cube $c$ and its adjacent cubes at layer $\ell$, we add cross-cube edges from nodes in $c$ to nodes in each adjacent cube $c_{\text{adj}}$ (Lines~1--3).
Specifically, for each point $p \in c$, we initiate a search from the entry point of $c_{\text{adj}}$ using a beam search with width $ef_{\text{cross}}$ (Lines~4--5).
We then select the top $M_{\text{cross}}$ nearest neighbors from $c_{\text{adj}}$ and connect them to $p$ as cross-cube edges (Line~6).
These edges are stored separately from intra-cube edges in the graph.

\begin{algorithm}[t]
\small
\caption{Cross-Cube Edge Addition}\label{alg:cross-edges}
\KwIn{Local index $\kw{CI}^{\text{Local}}$, $M_{\text{cross}}$, $ef_{\text{cross}}$}
\KwOut{$\CG$ index $\kw{CI}$}
\ForEach{cube $c$ at each layer $\ell$}{
    \ForEach{point $p$ in cube $c$ \textbf{in parallel}}{
        \ForEach{adjacent cube $c_{\text{adj}}$ of $c$}{
            Use $p$ as the query\;
            $\mathcal{N} \gets$ perform AKNN search in $c_{\text{adj}}$ with $ef_{\text{cross}}$\;
            Select the top $M_{\text{cross}}$ nearest neighbors from $\mathcal{N}$\;
            Add these neighbors as cross-cube edges of $p$\;
        }
    }
}
\end{algorithm}

\begin{figure}
    \centering
    \includegraphics[width=0.85\linewidth]{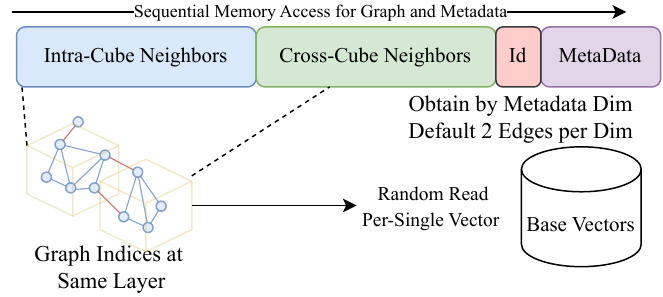}
    \vgap\caption{Memory layout of \(\CG\). Each node in \(\CG\) utilizes an identical memory layout. Unlike standard implementations such as hnswlib, our vector data is managed separately because the total number of nodes scales by a factor of \(L\) across \(L\) layers. Within any specific layer, a node maintains both intra-cube neighbors and cross-cube neighbors, the latter being determined by the metadata dimensionality. Additionally, aligning the graph structure with the metadata enables highly efficient filter evaluation during node traversal.}
    \label{fig:layout}
\end{figure}

\stitle{Design Rationale.}
The hierarchical structure with $L$ layers enables $\CG$ to adapt to filters of varying spatial extents. Coarse layers (small $\ell$) handle large filters efficiently with fewer, larger cubes, while fine layers (large $\ell$) provide precise filtering for small spatial regions. Cross-cube edges are essential for maintaining graph connectivity: without them, the search would be confined to a single cube, severely limiting recall. By connecting nodes of adjacent cubes, we create seamless routing paths that span multiple cubes while preserving the navigability properties of the underlying graph index. Fig.~\ref{fig:framework} illustrates how cross-cube edges (dashed lines) bridge local graphs during query processing. We also improve the memory layout for better search efficiency as detail in Fig.~\ref{fig:layout}.


\subsection{Query Processing}\label{sec:query}

Query processing in $\CG$ consists of two stages: (1) layer selection and cube identification, and (2) graph search with filtering. We present two query processing strategies tailored to different filter characteristics: a predetermined cube search for simple filters and an on-the-fly merged search for complex filters.

\stitle{Layer Selection Strategy.}
Given a filter $\phi$, we first select the appropriate layer $\ell^*$ for query execution. The key insight is to match the filter's characteristic length with the cube width at each layer. For an axis-aligned bounding box filter or convex hull, the characteristic length is the maximum side length; for a circular filter, it is the diameter. In practice, we select the layer $\ell^*$ by comparing the filter bounding box dimensions with the cube widths at each layer, choosing the layer where $w_\ell$ is closest to the filter characteristic length. More specifically, we find the layer with the largest cube width less than the characteristic length $r$ by binary search as the query layer where $r/2 < w_{l}<r$.

\stitle{Cube Identification.}
After selecting layer $\ell^*$, we identify all cubes intersecting the filter $\phi$. For simple filters (axis-aligned bounding boxes), we compute the cube IDs directly by discretizing the min/max bounds of filter. For complex filters (circles, polygons), we use a conservative approach: compute the filter's bounding box, identify all cubes intersecting this bounding box during graph search, then apply the filter predicate during search. Fig.~\ref{fig:framework} illustrates cube identification for different filter shapes.

\begin{algorithm}[t]
\caption{Predetermined Cube Search}\label{alg:predetermined-search}
\KwIn{Query $\mathbf{q}$, filter $\phi$, $k$, layer $\ell$, search budget $ef$}
\KwOut{Top-$k$ results satisfying $\phi$}
$R\gets\emptyset$\;
Identify cube list $\mathcal{C} = \{c_1, \ldots, c_m\}$ intersecting $\phi$\;
Build adjacency bitmap: $B[c] = 1$ for all $c \in \mathcal{C}$\;
Initialize priority queue $Q$ with entry points in $\mathcal{C}$\;
\While{top result in $Q$ better than $k$-th result in $R$}{
    $p \gets$ pop closest candidate from $Q$\;
    \ForEach{neighbor $u$ of $p$ (intra-cube or cross-cube)}{
        \If{$u$ is cross-cube edge and $B[u.\text{cube}] = 0$}{
            \textbf{continue}\;
        }
        \If{$\phi(\mathbf{s}_u) = 1$}{
            add $u$ to result set $R$\;
        }
        Add $u$ to $Q$ if not visited\;
        keep top $ef$ results in $R$\;
    }
}
\Return{top-$k$ from $R$}
\end{algorithm}

\stitle{Predetermined Cube Search.}
Algorithm~\ref{alg:predetermined-search} presents the predetermined cube search strategy, suitable for simple filters where all intersecting cubes can be identified upfront. Given a query vector $\mathbf{q}$, filter $\phi$, parameter $k$ and search effort $ef$, we first identify the set $\mathcal{C} = \{c_1, c_2, \ldots, c_m\}$ of cubes intersecting $\phi$ at the selected layer (line 2). We construct an adjacency bitmap for efficient neighbor checking: for each cube in $\mathcal{C}$, we mark it as searchable in a bitmap (line 3). We initialize the search by adding the entry points of all cubes in $\mathcal{C}$ to the candidate queue (line 4). During beam search, we follow intra-cube edges normally, but only follow cross-cube edges to cubes in $\mathcal{C}$ (lines 5-9). The filter predicate $\phi$ is applied to each candidate to ensure only qualifying points are returned (line 8). This approach minimizes overhead by pre-computing the search domain and avoiding dynamic cube discovery.

\stitle{On-the-Fly Merged Search.}
Algorithm~\ref{alg:fly-search} presents the on-the-fly merged search strategy, designed for complex filters where relevant cubes are discovered dynamically during search. Given a query vector $\mathbf{q}$, filter $\phi$, an entry cube $c_0$, and search budget $ef$, we initialize a dynamic cube bitmap $B$ with only $c_0$ marked as searchable (lines 1-2). We perform beam search with the same termination condition as predetermined search: the search continues while the top candidate in the priority queue $Q$ is better than the $k$-th result in $R$ (line 3). For each neighbor $n$ explored during beam search, we first check if its cube is marked as searchable in $B$ (line 4). If the neighbor satisfies the filter $\phi$, we mark its cube as searchable by setting $B[n.\text{cube}] = 1$ and add it to the result set $R$ (lines 5-6). We maintain the top $ef$ results in $R$ to control search budget (line 8). This dynamic discovery mechanism allows the search to naturally expand into relevant cubes as qualifying points are encountered, without requiring upfront computation of all intersecting cubes. This approach is particularly effective for complex filter shapes (circles, polygons) where geometric intersection tests are expensive.

\begin{algorithm}[t]
\caption{On-the-Fly Merged Search}\label{alg:fly-search}
\KwIn{Query $\mathbf{q}$, filter $\phi$, $k$, entry cube $c_0$, search budget $ef$}
\KwOut{Top-$k$ results satisfying $\phi$}
$R\gets\emptyset$\;
Initialize dynamic cube bitmap $B$ with $B[c_0] = 1$\;
Initialize priority queue $Q$ with entry point of $c_0$\;
\While{top result in $Q$ better than $k$-th result in $R$}{
    $p \gets$ pop closest candidate from $Q$\;
    \ForEach{neighbor $n$ of $p$ (intra-cube or cross-cube)}{
        \If{$B[n.\text{cube}] = 0$}{
            \textbf{continue}\tcp{Skip non-searchable cubes}
        }
        \If{$\phi(\mathbf{s}_n) = 1$}{
            $B[n.\text{cube}] \gets 1$\tcp{Mark cube as searchable}
            Add $n$ to result set $R$\;
        }
        Add $n$ to $Q$ if not visited\;
        keep top $ef$ results in $R$\;
    }
}
\Return{top-$k$ from $R$}
\end{algorithm}

\stitle{Comparison and Trade-offs.}
Table~\ref{tab:search-comparison} compares the two query processing approaches. Predetermined cube search is optimal for simple filters (axis-aligned bounding boxes) where cube intersection can be computed efficiently. It has lower per-candidate overhead since the search domain is fixed. On-the-fly merged search excels for complex filters (circles, polygons, irregular regions) where geometric intersection tests are expensive or the filter shape is not known upfront. The dynamic discovery adds overhead (checking filter predicate and updating bitmap), but eliminates the cost of pre-computing all intersecting cubes. In practice, we select the strategy based on filter complexity: use a predetermined search for bounding boxes and an on-the-fly search for other shapes.

\begin{table}[t]
\caption{Comparison of Query Processing Approaches}\label{tab:search-comparison}
\small
\centering
\begin{tabular}{l|c|c}
\toprule
Aspect & Predetermined & On-the-Fly \\
\midrule
Filter Type & Simple (boxes) & Complex (circles, polygons) \\
Cube Discovery & Pre-computed & Dynamic \\
Overhead & Lower & Medium \\
Flexibility & Limited & High \\
Use Case & Known  Cubes & Unknown Intersection \\
\bottomrule
\end{tabular}
\end{table}

\subsection{Dynamic Updates}\label{sec:dynamic-updates}

$\CG$ supports dynamic insertions and deletions while maintaining index quality and query performance. We describe the update procedures and analyze their complexity.

\stitle{Point Insertion.}
When a new data point $(o, \mathbf{x}, \mathbf{s})$ arrives, we insert it into the index as follows. First, we compute the cube ID for the point at each layer $\ell \in [0, L-1]$ based on its metadata $\mathbf{s}$ (same computation as in construction). For each layer, we insert the point into the local graph index of its containing cube using the standard graph insertion algorithm, which connects the new point to its $M$ nearest neighbors. We add cross-cube edges for the new point by searching from the entry points of adjacent cubes and connecting to the top $M_{\text{cross}}$ nearest neighbors in each adjacent cube. The insertion complexity is $O(L \cdot m \log N)$, dominated by cross-edge addition over the $2m$ adjacent cubes at each of the $L$ layers ($L$ is a small constant in practice).

\stitle{Point Deletion.}
We adopt a lazy deletion strategy for efficiency. When a point is deleted, we mark it as invalid in all containing cubes across all layers. During query processing, we skip invalidated points when they appear in the candidate queue. Periodically (e.g., when the deletion rate exceeds a threshold), we rebuild affected cube indices to reclaim memory and maintain search efficiency. Lazy deletion has $O(L)$ complexity for marking the affected entries.

\section{Analysis and Extension}\label{sec:extend}

In this section, we provide a theoretical analysis of $\CG$ using the framework of characteristic length, cube length, and elastic factor~\cite{ELI-VLDB-2026-Mingyu}. We analyze space and time complexity under the uniform distribution assumption, and discuss extensions to the $\CG$ framework for user-specific indexes.

\subsection{Analysis}\label{sec:analysis}

\stitle{Uniform Distribution Model.}
We assume the metadata space $\mathcal{S} = [0, S]^m$ is an $m$-dimensional hypercube. Given a dataset $\mathcal{D}$ of $N$ points, we assume the metadata vectors $\mathbf{s}_i$ are independently and uniformly distributed in $\mathcal{S}$. The hierarchical grid has $L$ levels with granularity parameter $g$, where level $\ell$ partitions $\mathcal{S}$ into $g^\ell$ cubes. We denote the cube width (side length) at layer $\ell$ as $w_\ell = S / g^\ell$. In practice, $L$ is set to a small constant (independent of $N$); filters with characteristic length smaller than the deepest cube width $w_L$ are simply routed to the bottom layer rather than triggering deeper subdivision, so $L$ remains $O(1)$ throughout the analysis.

\stitle{Characteristic Length.}
For a given filter $\phi$, we define its \emph{characteristic length} $r$ as follows: for an axis-aligned bounding box, $r = r_{\max}$ is the maximum side length; for a circular or spherical filter, $r$ is the diameter. For rectangular filters, we also define $r_{\min}$ as the minimum side length and the \emph{aspect ratio} $\alpha = r_{\max} / r_{\min} \ge 1$. The characteristic length captures the spatial extent of the filter in the metadata space, while the aspect ratio characterizes the shape of the filter ($\alpha = 1$ for squares, $\alpha > 1$ for non-square rectangles).

\stitle{Elastic Factor.}
We adapt the elastic factor concept from \cite{ELI-VLDB-2026-Mingyu,ESG-arxiv-2025-Mingyu} to characterize query efficiency in \CG{}. The elastic factor measures the overlap between the filtered candidate set and the index searched set.

\begin{definition}[Elastic Factor for CubeGraph]\label{def:elastic-factor}
Given a dataset $\mathcal{D}$, a query $(\mathbf{q}, \phi)$, and the merged graph $\mathcal{G}^*$ at layer $\ell$, the \emph{elastic factor} is defined as:
\begin{equation*}
    e(\mathcal{D}_\phi, \mathcal{G}^*) = \frac{|\mathcal{D}_\phi|}{|\mathcal{G}^*|} = \frac{|\{(o, \mathbf{x}, \mathbf{s}) \in \mathcal{D} \mid \phi(\mathbf{s}) = 1\}|}{|\{(o, \mathbf{x}, \mathbf{s}) \in \mathcal{D} \mid \mathbf{s} \in \bigcup \mathcal{C}\}|}
\end{equation*}
where $\mathcal{C}$ is the set of cubes intersecting $\phi$ at layer $\ell$.
\end{definition}

The elastic factor $e \in (0, 1]$ measures what fraction of the searched points actually satisfy the filter. Under uniform distribution, the elastic factor can be approximated by the volume ratio:
\begin{equation*}
    e \approx \frac{\text{Vol}(\phi)}{\text{Vol}(\bigcup \mathcal{C})}
\end{equation*}
where $\text{Vol}(\cdot)$ denotes the $m$-dimensional volume.

\stitle{Optimal Layer Selection.}
\begin{prop}\label{prop:optimal-layer}
For a filter $\phi$ with characteristic length $r$, choosing layer $\ell^*$ such that $r/2 < w_{\ell^*} \le r$ guarantees that $\phi$ intersects at most $3^m$ cubes; this choice is optimal in the sense that any other layer increases either the cube count or the elastic factor lower bound (see technical report~A.1).
\end{prop}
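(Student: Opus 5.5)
The plan has two parts: a counting argument that gives the $3^m$ bound, and a two-sided comparison with the neighbouring layers that gives optimality. Throughout, $\phi$ is enclosed in its axis-aligned bounding box, whose every side has length at most $r$ (for a ball, every side equals the diameter $r$). Any cube intersecting $\phi$ also intersects this box, so it suffices to bound the number of layer-$\ell^*$ cubes meeting the box.

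\textbf{Counting step.} Project onto a single coordinate axis. The box projects to an interval $I$ of length at most $r$. The layer-$\ell^*$ grid partitions the axis into consecutive intervals of width $w_{\ell^*}$. An interval of length $\lambda$ meets at most $\lceil \lambda/w\rceil + 1$ consecutive grid intervals of width $w$.
\begin{itemize}
\item Since $\lambda \le r < 2w_{\ell^*}$, we have $\lceil \lambda/w_{\ell^*}\rceil \le 2$, so $I$ meets at most $3$ grid intervals.
\item The cubes meeting the box are exactly the products of the grid intervals met in each coordinate.
\item This gives at most $3^m$ cubes. The only care needed is with closed or half-open endpoints: since $\lambda < 2w$ is strict, an endpoint lying exactly on a grid line does not create a fourth interval.
\end{itemize}

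\textbf{Optimality, finer layers.} Take any $\ell > \ell^*$. The width $w_\ell \le w_{\ell^*}/2 \le r/2$ is too small. Along the axis achieving the characteristic length, the interval has length $r \ge 2w_\ell$. It therefore meets at least $\lfloor r/w_\ell\rfloor \ge 2$ grid intervals in that axis, and more generally $\Omega\big((r/w_\ell)\big)$ per long axis. For a square or ball filter, the count grows like $(r/w_\ell)^m$, which is unbounded as $\ell$ grows. In particular, the worst-case count can exceed $3^m$ once $w_\ell < r/3$. For the borderline $\ell = \ell^*+1$, I would state the claim as ``the worst-case cube count does not decrease.'' I would also note that finer layers mean more sub-graph fragmentation and more cross-cube hops; the proposition's phrase ``increases the cube count'' should be read in this worst-case sense.

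\textbf{Optimality, coarser layers.} Take $\ell < \ell^*$, so $w_\ell \ge 2w_{\ell^*} > r$. The covered volume is at least one cube, $w_\ell^m$, while $\mathrm{Vol}(\phi) \le r^m$. Using the volume approximation of the elastic factor from Definition~\ref{def:elastic-factor}, this gives $e \le (r/w_\ell)^m < 1$, and the bound decays geometrically like $g^{-(\ell^*-\ell)m}$. At layer $\ell^*$, by contrast, the covered region is at most $3^m w_{\ell^*}^m \le 3^m r^m$. For a box-like filter with volume $\Theta(r^m)$, this yields $e \ge \mathrm{Vol}(\phi)/(3r)^m$, which is a constant independent of $N$.

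\textbf{Main obstacle.} The hardest part is making ``optimal'' precise. The elastic factor depends on the filter's volume, which for thin rectangles is governed by $r_{\min}$, not $r$. I would therefore state the lower bound in the form $e \ge \mathrm{Vol}(\phi)/(3w_{\ell^*})^m$ and compare it with the upper bound $\mathrm{Vol}(\phi)/w_\ell^m$ at coarser layers, which is strictly worse by a factor of at least $(w_\ell/(3w_{\ell^*}))^m$. The aspect ratio $\alpha$ cancels in this comparison. For $g=2$ and $\ell = \ell^*-1$, however, that factor is only $(2/3)^m$, so the comparison is not strict between these two bounds. I would therefore phrase optimality as a dichotomy on worst-case guarantees, in a Pareto sense:
\begin{itemize}
\item finer layers lose the $O(1)$, $N$-independent bound on the number of cubes;
\item coarser layers lose an elastic-factor guarantee of $\Theta(\mathrm{Vol}(\phi)/w_{\ell^*}^m)$, with the decay becoming geometric as $\ell$ moves further from $\ell^*$.
\end{itemize}
Only $\ell^*$ keeps both guarantees simultaneously.
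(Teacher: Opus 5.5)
Your argument is essentially the paper's proof in Appendix~A.1. The bounding-box reduction, the per-axis count $\lceil a_i/w_{\ell^*}\rceil+1\le 3$ giving $3^m$, and the two-sided optimality argument (coarser layers force $e\le (r/w_\ell)^m$, finer layers force on the order of $(r/w_\ell)^m$ cubes) are exactly its Parts~1--4, and you treat the neighbouring layers $\ell^*\pm1$ more carefully than the paper's ``$w_\ell\gg r$''/``$w_\ell\ll r$'' cases.

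The non-strictness you hit at $\ell^*\pm1$ goes away if you compare worst case with worst case, not with a single-cube best case or with the crude bound $3^m$. With the halving grid, the cube union at $\ell^*-1$ can reach $(4w_{\ell^*})^m$, while at $\ell^*$ it is at most $(3w_{\ell^*})^m$. On the long axis, the exact worst-case count $\lceil r/w\rceil+1$ strictly increases when $w$ halves.

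Your closing $O(1)$/$\Theta(\cdot)$ dichotomy does not single out $\ell^*$, and the claim that only $\ell^*$ keeps both guarantees is false as stated. Layer $\ell^*+1$ still meets at most $5^m$ cubes, and layer $\ell^*-1$ still has $e\ge \mathrm{Vol}(\phi)/(4w_{\ell^*})^m$.
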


\begin{lem}\label{lem:elastic-factor-bound}
Under uniform distribution with optimal layer selection (Proposition~\ref{prop:optimal-layer}), where $w_{\ell^*} \approx r$, the elastic factor is lower bounded by a constant that depends on the filter shape. For circular filters in $m$ dimensions, $e \ge \pi^{m/2} / (6^m \cdot \Gamma(m/2 + 1))$, where $\Gamma$ is the gamma function.
\end{lem}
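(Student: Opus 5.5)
The plan is to pass through the volume approximation $e \approx \mathrm{Vol}(\phi)/\mathrm{Vol}(\bigcup\mathcal{C})$ that holds under the uniform distribution model, and then bound the numerator from below and the denominator from above.

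For the denominator, we invoke Proposition~\ref{prop:optimal-layer}. At the selected layer $\ell^*$ the filter meets at most $3^m$ cubes, each of volume $w_{\ell^*}^m$, so $\mathrm{Vol}(\bigcup\mathcal{C}) \le 3^m w_{\ell^*}^m$. The same conclusion can be reached directly. Since $w_{\ell^*} > r/2$, in each coordinate an interval of length at most $r < 2w_{\ell^*}$ meets at most $3$ consecutive grid intervals. Because $w_{\ell^*} \le r$, this gives $\mathrm{Vol}(\bigcup\mathcal{C}) \le (3w_{\ell^*})^m \le (3r)^m$.

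For the numerator, in the circular (ball) case the characteristic length $r$ is the diameter. The filter is therefore a ball of radius $r/2$, with
\[
\mathrm{Vol}(\phi) = \frac{\pi^{m/2}}{\Gamma(m/2+1)}\left(\frac{r}{2}\right)^m .
\]

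Dividing the two bounds gives
\[
e \gtrsim \frac{\pi^{m/2} (r/2)^m}{\Gamma(m/2+1)\,(3r)^m} = \frac{\pi^{m/2}}{6^m\,\Gamma(m/2+1)} .
\]
The factor $r^m$ cancels, so the bound is a constant that depends only on $m$ and the shape, as claimed.

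For other shapes the same template applies: divide the filter's volume by $(3r)^m$. For an axis-aligned box this gives $e \ge r_{\min}^{m-1} r_{\max}/(3r_{\max})^m$. This is at least $\alpha^{-(m-1)}/3^m$, which gives the "depends on the filter shape" part of the statement.

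The main obstacle is rigor in the step "$e \approx$ volume ratio". The elastic factor is a ratio of random counts, so a lower bound on it only holds in expectation or with high probability. I would handle this by applying Chernoff bounds separately to the numerator and the denominator, both of which are binomial with means $N\,\mathrm{Vol}(\phi)/S^m$ and $N\,\mathrm{Vol}(\bigcup\mathcal{C})/S^m$. Their ratio then concentrates around the volume ratio up to a $(1\pm o(1))$ factor, provided $N\,\mathrm{Vol}(\phi)/S^m \to \infty$. I would state the lemma in this asymptotic or expected sense, consistent with the "$\approx$" used in the paper.

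A secondary subtlety is the regime where $r$ exceeds the coarsest cube width, or falls below $w_L$. In those cases the condition $r/2 < w_{\ell^*} \le r$ cannot be met, and the bound needs the caveat that optimal layer selection is achievable, which is the hypothesis of the lemma.
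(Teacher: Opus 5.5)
Your proposal is correct and follows the paper's proof essentially step for step: you bound $\mathrm{Vol}(\bigcup\mathcal{C})$ by $3^m$ cubes via Proposition~\ref{prop:optimal-layer}, take the ball volume $\pi^{m/2}(r/2)^m/\Gamma(m/2+1)$ as the numerator, and divide by $(3r)^m$. Using $w_{\ell^*}\le r$ instead of the paper's $w_{\ell^*}^m\approx r^m$ makes the denominator bound a genuine inequality, and your Chernoff remark states explicitly the asymptotic, high-probability sense in which the paper's identification of $e$ with the volume ratio holds.
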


\begin{proof}
The filter with characteristic length $r$ intersects at most $3^m$ cubes (Proposition~\ref{prop:optimal-layer}), each with volume $w_{\ell^*}^m \approx r^m$. The union of cubes therefore has volume at most $3^m \cdot r^m$. For a circular filter with diameter $r$, the volume is $\pi^{m/2} \cdot (r/2)^m / \Gamma(m/2 + 1)$. Therefore, $e \ge \pi^{m/2} \cdot (r/2)^m / (\Gamma(m/2 + 1) \cdot 3^m \cdot r^m) = \pi^{m/2} / (6^m \cdot \Gamma(m/2 + 1))$. For $m=2$, $e \ge \pi/36 \approx 0.087$. The lower bound is a positive constant independent of $N$.
\end{proof}

\begin{exmp}
Consider a 2D circular filter with radius $r/2$ (diameter $r$) in a metadata space of side length $S = 100$. With optimal layer selection where $w_{\ell^*} \approx r$, the filter intersects at most {$3^2 = 9$} cubes. The circle has area $\pi (r/2)^2 = \pi r^2/4$, and the union of 9 cubes has area at most $9 r^2$. The elastic factor is $e \approx \pi r^2/4 / {(9 r^2)} = {\pi/36 \approx 0.087}$. In practice, the elastic factor is often higher because the filter may not span all 9 cubes fully.
\end{exmp}

The elastic factor degrades with aspect ratio: for a rectangular filter with characteristic length $r = r_{\max}$ and $\alpha = r_{\max}/r_{\min}$, the long dimension contributes at most $3$ cubes and each shorter dimension contributes at most $2$ cubes (since $r_{\min} < w_{\ell^*}$), giving a cube union of $3 \cdot 2^{m-1} \cdot r^m$ in the worst case. The filter volume is $r \cdot r_{\min}^{m-1} = r^m / \alpha^{m-1}$, so $e \approx 1 / (3 \cdot 2^{m-1} \cdot \alpha^{m-1})$, which degrades as a polynomial of degree $m-1$ in $\alpha$. See technical report~A.4 for detailed analysis of query performance degradation and technical report~A.5 for an example of high aspect ratio rectangles.

\begin{exmp}
Consider a 2D rectangular filter with sides $100 \times 10$ (aspect ratio $\alpha = 10$) in a metadata space of side length $S = 1000$. With optimal layer selection where $w_{\ell^*} \approx 100$, the rectangle intersects at most $3 \times 2 = 6$ cubes (3 along the long dimension, 2 along the short dimension since $r_{\min} = 10 < w_{\ell^*}$). The rectangle has area $1000$, and the union of $6$ cubes has area approximately $6 \times 100^2 = 60{,}000$. The elastic factor is $e \approx 1000 / 60{,}000 \approx 0.017$, much lower than the $\pi/36 \approx 0.087$ bound for circular filters at the same layer. This demonstrates why high aspect ratio filters lead to poor query performance in \CG{}.
\end{exmp}

\stitle{Space Complexity.}
We analyze the space complexity of \CG{} under uniform distribution.

\begin{thm}\label{thm:space-complexity}
Under uniform metadata distribution, the space complexity of \CG{} is $O(N \cdot m)$, where $N$ is the dataset size and $m$ is the metadata dimensionality. The hierarchy depth $L$, intra-cube max degree $M$, and cross-cube max degree $M_{\text{cross}}$ are treated as constants of the index configuration.
\end{thm}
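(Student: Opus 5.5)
We bound the storage of \CG{} by adding up its components layer by layer: intra-cube graph edges, cross-cube edges, the per-cube bookkeeping (entry points, adjacency lists, bitmaps), and the stored metadata and vectors. The central observation is that every layer $\ell$ partitions the point set. By Algorithm~\ref{alg:grid-construction}, each point lands in exactly one cube at each layer, so $\sum_{c \text{ at layer } \ell} |c| = N$. Since the layers are built independently and $L = O(1)$, it is enough to show that one layer costs $O(N\cdot m)$ and then multiply by $L$.

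\emph{Step 1 (intra-cube edges).} We take the local index on a cube $c$ to be a standard HNSW (or any graph index of maximum degree $M$). It uses $O(M\cdot|c|)$ space, since HNSW's expected number of upper-level nodes is geometric and so adds only a constant factor. Summing over the cubes of a layer gives $O(MN)$.

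\emph{Step 2 (cross-cube edges).} Algorithm~\ref{alg:cross-edges} gives each point $p$ at most $M_{\text{cross}}$ edges into each face-adjacent cube. A cube has at most $2m$ such neighbors, as established in Phase~1. So $p$ has at most $2m\,M_{\text{cross}}$ cross edges per layer, and the layer total is $O(m\,M_{\text{cross}}\,N)$. This is where the factor $m$ enters.

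\emph{Step 3 (vectors, metadata, cube directory).} By the memory layout in Fig.~\ref{fig:layout}, the raw vectors are stored once. Within each layer, each node stores only an id and its $m$ metadata coordinates, which costs $O(Nm)$ per layer. The shared vector array costs $O(Nd)$; we count it as the input size, as is standard for graph-index space bounds, and it is not part of the index overhead in $N$ and $m$.

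\emph{Step 4 (per-cube bookkeeping), the main obstacle.} Layer $\ell$ has $(2^{\ell+1})^m$ cubes, which is exponential in $m$ and in principle independent of $N$, so storing all of them naively could dominate. The plan is to store only \emph{non-empty} cubes, which Algorithm~\ref{alg:grid-construction} already does by building indices for non-empty cubes only, for example in a hash map from cube ID to cube data. There are at most $\min\{N, (2^{\ell+1})^m\}$ non-empty cubes, and each keeps an entry point plus at most $2m$ adjacency pointers. Layer $\ell$ therefore needs $O(\min\{N,2^{(\ell+1)m}\}\cdot m)=O(Nm)$ for its directory. The uniform-distribution assumption tells us that at shallow layers, where $2^{(\ell+1)m}\ll N$, almost every cube is non-empty and holds about $N/2^{(\ell+1)m}$ points. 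Under that assumption the bound is tight and not wasteful, though it is not needed for the worst-case count above.

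Adding Steps~1--4 gives $O\bigl(L\,(M + m\,M_{\text{cross}} + m)\,N\bigr)$ for the whole index. Treating $L$, $M$, and $M_{\text{cross}}$ as configuration constants, this is $O(N\cdot m)$.
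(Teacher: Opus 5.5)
Your proof is correct and follows the same route as the paper's: each point lies in exactly one cube per layer and carries $O(M)$ intra-cube edges plus at most $2m\,M_{\text{cross}}$ cross-cube edges, and summing over $N$ points and $L=O(1)$ layers gives $O(N\cdot m)$. Your Steps 3--4 (per-layer metadata, and a directory over non-empty cubes only) account for storage that the paper's proof leaves out, and they stay within the same bound; the bitmaps you list are built per query, so they are not charged to the index.
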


\begin{proof}
Each data point appears in exactly one cube at each of the $L$ layers. At each layer, the point maintains $O(1)$ intra-cube edges and $O(m)$ cross-cube edges (one set per adjacent cube, of which there are $2m$ in an $m$-dimensional grid). Treating $L$, $M$, and $M_{\text{cross}}$ as constants, each point stores $O(m)$ edges per layer, yielding total space $O(L \cdot N \cdot m) = O(N \cdot m)$.
\end{proof}

\stitle{Construction Time Complexity.}
We analyze the time complexity of index construction.

\begin{thm}\label{thm:construction-time}
The construction time complexity of \CG{} is $O(N \cdot L \cdot m \cdot \log N)$, where $L$ is the number of hierarchy layers ($L$ is a small constant in practice).
\end{thm}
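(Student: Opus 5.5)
We bound the cost of Algorithm~\ref{alg:grid-construction} (Phase~1) and Algorithm~\ref{alg:cross-edges} (Phase~2) layer by layer and sum over the $L$ layers. Throughout, we take the standard cost model of an HNSW-style graph: inserting a point into a graph of $n$ nodes, or running a beam search with width $ef$ on it, costs $O(\log n)$ distance evaluations, with $M$, $ef_{\text{c}}$, $ef_{\text{cross}}$, $M_{\text{cross}}$ (and the vector dimensionality $d$ per distance) as configuration constants, as in Theorem~\ref{thm:space-complexity}. The proof therefore has three parts: Phase~1 cost per layer, Phase~2 cost per layer, and the sum.

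\textbf{Phase 1.} For a fixed layer $\ell$ we use three ingredients.
\begin{itemize}
\item The bounding box $\mathcal{B}$ and the cube IDs take $O(N\cdot m)$ time, since each point needs $m$ coordinate discretizations.
\item Each point lies in exactly one cube (as already used in the proof of Theorem~\ref{thm:space-complexity}). Building the local graph of a cube $c$ with $n_c$ points therefore costs $O(n_c \log n_c)$.
\item Since $\sum_c n_c = N$ and $n_c\le N$, the per-layer build cost is $O(\sum_c n_c\log n_c)=O(N\log N)$.
\end{itemize}
This bound uses no uniformity, so it holds for any metadata distribution.

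\textbf{Phase 2.} This phase dominates the construction cost. Fix a layer $\ell$ and a cube $c$ there. Every point $p\in c$ runs one AKNN search with width $ef_{\text{cross}}$ in each face-adjacent cube $c_{\text{adj}}$, of which there are at most $2m$. Each search costs $O(\log n_{c_{\text{adj}}})\le O(\log N)$. Selecting the top $M_{\text{cross}}$ results and storing them costs $O(1)$. Summing over all $N$ points of the layer gives $O(N\cdot 2m\cdot \log N)=O(N m\log N)$ per layer.

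\textbf{Total.} Adding the two phases gives $O(Nm + N\log N + Nm\log N)=O(Nm\log N)$ per layer. Multiplying by $L$ gives the claimed $O(N\cdot L\cdot m\cdot\log N)$. Parallelism lowers wall-clock time but not total work, so it does not enter the bound.

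The main obstacle is the per-search cost bound $O(\log n)$ for graph insertion and beam search. This is an empirical property of HNSW-type graphs, not a worst-case theorem. We would state it explicitly as an assumption inherited from the HNSW analysis cited in \S~\ref{sec:prior-works}, and make the theorem conditional on it.

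A second subtlety concerns empty adjacent cubes, for which the search costs $O(1)$. Under the uniform model, cube sizes are also roughly $N/g^{\ell m}$, which would even give the sharper $\log(N/g^{\ell m})$. We would not need that refinement, since $\log n_c\le\log N$ suffices for the stated bound.
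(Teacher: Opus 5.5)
Your proposal is correct and follows the paper's proof: the same two-phase decomposition, with Phase~1 costing $O(N \cdot L \cdot \log N)$ under the HNSW insertion cost, and Phase~2 dominating at $O(N \cdot L \cdot m \cdot \log N)$ because each point runs $2m$ adjacent-cube searches of cost $O(\log N)$ per layer. Your additions only make explicit what the paper leaves implicit: the $O(Nm)$ discretization cost, the per-cube $O(n_c \log n_c)$ refinement, and treating the $O(\log n)$ search cost as a stated assumption.
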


\begin{proof}
Construction consists of two phases: (1) building local graph indices within each cube, and (2) adding cross-cube edges. For phase (1), each of the $N$ points is inserted into the local graph at each of the $L$ layers, with each insertion costing $O(\log N)$ on average for graph-based indices like HNSW, yielding $O(N \cdot L \cdot \log N)$ time. For phase (2), each point at each layer searches in $2m$ adjacent cubes; each search costs $O(\log N)$, yielding $O(N \cdot L \cdot m \cdot \log N)$ time. The total construction time is dominated by phase (2): $O(N \cdot L \cdot m \cdot \log N)$.
\end{proof}

\stitle{Remark.}
Recent studies~\cite{RNSM-Arxiv-2026-Mingyu} show that cross-cube links require only minimal search effort ($ef_c = 30$) compared to full graph construction ($ef_c = 200$). See our technical report for details.

\stitle{Query Time Complexity.}
We now analyze the query time complexity, which depends critically on the relationship between characteristic length $r$, cube width $w_\ell$, and elastic factor $e$.

\begin{thm}\label{thm:query-time}
Given a query $(\mathbf{q}, \phi)$ with characteristic length $r$, selecting the optimal layer $\ell^*$ where $w_{\ell^*} \approx r$ (as in Proposition~\ref{prop:optimal-layer}), if the elastic factor $e \ge c$ for some constant $c \in (0, 1]$, the expected query time to retrieve top-$k$ results is $O(C + k/c)$, where $C$ is the expected cost to locate the top-1 neighbor in the merged graph $\mathcal{G}^*$.
\end{thm}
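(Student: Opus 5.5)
The proof splits the query into two phases and bounds each separately. In the first phase, the beam search of Algorithm~\ref{alg:predetermined-search} (or Algorithm~\ref{alg:fly-search}) starts from the entry points of the cubes in $\mathcal{C}$ and routes through the merged graph $\mathcal{G}^*$ until it reaches the vicinity of the nearest filtered neighbor. By definition this costs $C$ in expectation. Proposition~\ref{prop:optimal-layer} bounds $|\mathcal{C}|$ by $3^m$, which is $O(1)$ for fixed $m$. So initializing the queue with one entry point per cube and building the bitmap add only $O(1)$ overhead, which is absorbed into $C$.

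The second phase is the expansion from the top-1 neighbor to the top-$k$ filtered results. I would model it as follows. Once the search is in the correct region of $\mathcal{G}^*$, it visits points of $\bigcup\mathcal{C}$ in approximately increasing distance from $\mathbf{q}$, since the cross-cube edges make $\mathcal{G}^*$ behave like a monolithic proximity graph over the cube union.

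Under the uniform distribution model, the metadata $\mathbf{s}$ is independent of the vector $\mathbf{x}$. Hence each visited point of $\mathcal{G}^*$ satisfies $\phi$ independently with probability $|\mathcal{D}_\phi|/|\mathcal{G}^*| = e \ge c$ (Definition~\ref{def:elastic-factor}). The number of visited points needed to collect $k$ filtered results is then a sum of $k$ geometric random variables with success probability $e$. Its expectation is $k/e \le k/c$.

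Each visited node costs $O(1)$ work, since its degree is $M + 2m\,M_{\text{cross}} = O(1)$. Adding the two phases gives the $O(C + k/c)$ bound.

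The main obstacle is justifying the approximation in the second phase: that the beam search visits non-filtered points of $\mathcal{G}^*$ in essentially distance order, and so pays only a factor $1/e$ overhead. I would handle this with the standard assumption from the elastic-factor analyses of ELI and ESG~\cite{ELI-VLDB-2026-Mingyu,ESG-arxiv-2025-Mingyu}. That assumption is that the graph routes as well as an exact $k$-NN scan restricted to its node set, and that the merged graph built with cross-cube edges~\cite{RNSM-Arxiv-2026-Mingyu} meets it. Granting this, the filtered subset behaves as a uniform random $e$-fraction sample of the distance-ordered stream, and linearity of expectation finishes the argument.
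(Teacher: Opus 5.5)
Your proposal matches the paper's proof: the paper also charges $C$ to reaching the top-1 neighbor in $\mathcal{G}^*$, treats the $3^m$ cube identification and an $O(\log L)$ layer-selection step (which you omit, harmlessly) as constant overhead, and then pays amortized $O(1/c)$ visited candidates per additional result from $e \ge c$. Your negative-binomial argument, which explicitly assumes $\mathbf{s}$ is independent of $\mathbf{x}$ and that the search visits nodes in distance order, simply makes precise the step the paper asserts directly, namely that a $c$-fraction of visited nodes pass the filter.
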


\stitle{Remark.} The merged graph $\mathcal{G}^*$ exhibits the same recall-throughput trade-off as a monolithic graph index built over the same union of points; that is, $C$ inherits the cost of the underlying graph index (e.g., $C = O(\log N)$ on for HNSW~\cite{Acorn-SIGMOD-2024,HNSW-Merge-SIGMOD-2026-Jianguo-Wang,hnswlib}). This equivalence is validated empirically in previous works~\cite{RNSM-Arxiv-2026-Mingyu,HNSW-Merge-SIGMOD-2026-Jianguo-Wang}, where \CG{}'s recall-Qps curves match or exceed those of monolithic baselines on the same data.

\begin{cor}\label{cor:query-time-independence}
Under optimal layer selection with elastic factor $e \ge c$, the query time of \CG{} is $O(C + k/c)$, which is independent of the dataset size $N$ (apart from $C$'s graph-search dependence) and depends only on the filter geometry (through $r$ and $e$) and the graph structure (through $C$).
\end{cor}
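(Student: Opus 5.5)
The corollary follows from Theorem~\ref{thm:query-time}, so the plan is to invoke that theorem and then track which quantities can depend on $N$. First, I fix the filter $\phi$ with characteristic length $r$ and select $\ell^*$ as in Proposition~\ref{prop:optimal-layer}, so that $r/2 < w_{\ell^*} \le r$. The proposition gives that the cube set $\mathcal{C}$ has $|\mathcal{C}| \le 3^m$. This count depends only on $m$ and on the ratio $r/w_{\ell^*}$, not on $N$. Under the hypothesis $e \ge c$, which Lemma~\ref{lem:elastic-factor-bound} guarantees for circular filters with the explicit constant $c = \pi^{m/2}/(6^m\Gamma(m/2+1))$, Theorem~\ref{thm:query-time} gives expected query time $O(C + k/c)$.

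Next I check that each term is free of $N$ except through $C$. For the additive $k/c$ term, the search must pop and verify roughly $k/e$ candidates from $\mathcal{G}^*$ before the termination condition of Algorithm~\ref{alg:predetermined-search} or Algorithm~\ref{alg:fly-search} can hold. By uniformity, qualifying points are a fraction $e$ of the points in $\bigcup\mathcal{C}$, so $k/e \le k/c$. The constant $c$ is a function only of the filter geometry: the shape via the volume ratio, and $r$ relative to $w_{\ell^*}$. For rectangles it takes the form $1/(3\cdot 2^{m-1}\alpha^{m-1})$ from the aspect-ratio discussion, which again involves no $N$.

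The remaining overheads are initialization and bitmap setup. Building the bitmap costs $O(|\mathcal{C}|) = O(3^m)$, and seeding the queue with one entry per cube costs the same. Both are constants independent of $N$ and are absorbed into the big-$O$. Collecting terms gives $O(C + k/c)$, where $C$ is the only $N$-dependent quantity. By the Remark after Theorem~\ref{thm:query-time}, $C$ inherits the graph-search cost of a monolithic index over $\bigcup\mathcal{C}$, for example $O(\log N)$ for HNSW.

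The main obstacle I expect is justifying that $C$ captures the only dependence on $N$, that is, that the per-cube fragmentation does not add an extra factor. I would lean on the bounded cube count $3^m$ from Proposition~\ref{prop:optimal-layer} together with the cross-cube edges, which make $\mathcal{G}^*$ behave like a single graph. This is the sense in which the corollary is stated "apart from $C$'s graph-search dependence." I would not try to prove a bound on $C$ itself.
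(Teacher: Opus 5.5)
Your proposal is correct and follows the paper's route: the corollary is read off Theorem~\ref{thm:query-time}, whose proof in Appendix A.3 likewise bounds the number of merged cubes by $3^m$ via Proposition~\ref{prop:optimal-layer}, charges $O(1/c)$ amortized candidates per additional result through the elastic factor, and absorbs constant overheads so that only $C$ carries any $N$-dependence. (The paper's constant term is the $O(\log L)$ layer selection, while yours is the $O(3^m)$ bitmap and queue seeding.) Like the paper, you leave the behavior of $C$ on the merged graph to the Remark's empirical equivalence with a monolithic index rather than proving it.
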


\stitle{Dynamic Update Complexity.}
We briefly analyze the complexity of dynamic updates.

\begin{thm}\label{thm:update-complexity}
Point insertion has time complexity $O(L \cdot m \log N)$, and lazy deletion has time complexity $O(L)$ for marking a point as invalid, where $L$ is the number of hierarchy layers ($L$ is a small constant in practice).
\end{thm}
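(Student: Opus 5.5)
The plan is to treat insertion and deletion separately, following the procedures described in \S~\ref{sec:dynamic-updates}, and to charge each step against the per-layer costs already used in the proof of Theorem~\ref{thm:construction-time}.

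\textbf{Insertion.} Fix a new point $(o,\mathbf{x},\mathbf{s})$.

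\emph{Cube lookup.} At each layer $\ell\in[0,L-1]$, computing its cube ID is a coordinate discretization costing $O(m)$ (or $O(1)$ per coordinate). Over all layers this is $O(Lm)$, which is dominated by the later terms.

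\emph{Intra-cube insertion.} At each layer we insert the point into the local graph of its cube with the standard graph insertion routine, for example HNSW. We assume, as in the proof of Theorem~\ref{thm:construction-time}, that this costs $O(\log N)$ on average; the cube holds at most $N$ points, so $\log|c|\le\log N$. Pruning and back-linking at the $M$ neighbors touch $O(M)=O(1)$ lists. This gives $O(L\log N)$ over all layers.

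\emph{Cross-cube edges.} The cube has at most $2m$ face-adjacent cubes. For each one we run an AKNN search with beam width $ef_{\text{cross}}$ from that cube's entry point, at $O(\log N)$ per search by the same assumption, and then keep the top $M_{\text{cross}}$ results. This gives $O(m\log N)$ per layer and $O(Lm\log N)$ in total.

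Summing the three steps gives $O(Lm + L\log N + Lm\log N)=O(L\cdot m\log N)$.

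\textbf{Main obstacle.} The delicate point is the treatment of reverse cross-cube edges. Points in adjacent cubes may need to add the new point to their own cross-cube lists. I would handle this as HNSW does: only the $M_{\text{cross}}$ selected neighbors in each adjacent cube receive a reverse-link check. Each check costs $O(M_{\text{cross}})=O(1)$ with a bounded-degree prune. This adds $O(m)$ work per layer, which is absorbed into the bound above. I would state explicitly that, as in Theorem~\ref{thm:construction-time}, the $O(\log N)$ search cost is an expected or amortized assumption about the underlying graph index rather than a worst-case guarantee.

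\textbf{Deletion.} Lazy deletion only sets an invalid flag on the point's entry in its single containing cube at each of the $L$ layers. Because each point appears in exactly one cube per layer (the fact used in Theorem~\ref{thm:space-complexity}), there are $L$ entries to mark. With a direct id-to-slot map per layer, each mark costs $O(1)$, for $O(L)$ in total. Periodic rebuilding is excluded from this cost, since the theorem concerns only the marking step.
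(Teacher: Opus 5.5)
Your proposal is correct and follows essentially the same approach as the paper's proof. Per layer, it charges $O(\log N)$ for the local-graph insertion and $O(m\log N)$ for the $2m$ adjacent-cube searches, sums over $L$ layers, and charges $O(1)$ marking per layer for lazy deletion; your extra accounting for the $O(m)$ cube-ID computation and for HNSW-style reverse cross-cube links is not in the paper (whose insertion adds only the new point's outgoing cross-cube edges), but both are absorbed into the same $O(L\cdot m\log N)$ bound.
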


\begin{proof}
For insertion, the point is inserted into each of the $L$ layers. At each layer, inserting into the local graph costs $O(\log N)$, and establishing cross-cube edges to the $2m$ adjacent cubes costs $O(m \log N)$. Summing over $L$ layers gives total insertion cost $O(L \cdot m \log N)$. For lazy deletion, marking the point as invalid in each layer is $O(1)$ per layer, so total cost is $O(L)$.
\end{proof}

\section{Experiments}\label{sec:exp}
Our experiments evaluate $\CG$ across following dimensions: search efficiency compared to state-of-the-art; query performance under diverse filter shapes and spatial distributions; the effectiveness for scalability.

\subsection{Experimental Setup}\label{sec:exp-setup}

\stitle{Datasets.}
Table~\ref{tab:dataset_details} summarizes the statistics of the datasets used in our experiments.
We use the following standard ANN benchmarks:
\begin{itemize}[leftmargin=*,topsep=2pt,itemsep=1pt]
  \item \textbf{SIFT1M}~\cite{ann-benchmakrs}: 1M 128-dimensional SIFT descriptors with 10K queries. We generate synthetic 2D/3D/4D various distribution attributes in $[0,1]^m$ for spatio-temporal filtering.
  \item \textbf{YFCC}: 1M 512-dimensional CLIP embeddings extracted from Flickr images with real geolocation metadata (latitude/longitude) and timestamp (normalized). We use the first 1M vectors for our experiments.
  \item \textbf{WIKI}: 1.16M 384-dimensional sentence embeddings from the WikiText-103 corpus,
   encoded with the sentence transformer. Each embedding represents a Wikipedia article     
  passage. We attach real 2D geolocation metadata (latitude/longitude) and timestamp (normalized). 
  \item \textbf{MSMARC10M}: 10M 1024-dimensional text embeddings from the MS MARCO passage ranking dataset. We generate synthetic 2D/3D/4D uniform attributes for filtered search evaluation.
  \item \textbf{Deep100M}: 100M 96-dimensional vectors sampled from the Deep1B dataset~\cite{ann-benchmakrs}. We generate 2D/3D uniform spatial attributes to test scalability.
\end{itemize}

\begin{table}[!t] 
\centering 
\caption{The Statistics of Datasets}\vspace{-2ex}
\label{tab:dataset_details} 
\begin{footnotesize}
\begin{tabular}{l|ccccc}
\toprule
\textbf{Dataset} & \textbf{Size} & \textbf{Dim} & \textbf{Query Size} & \textbf{Metadata} \\
\midrule
SIFT1M & 1,000,000 & 128 & 10,000 & 2D/3D/4D Uniform \\
YFCC & 999,894 & 512 & 1,000 & 2D/3D Geo \\
WIKI & 1,170,381 & 384 & 1,000 & 2D/3D Geo \\
MSMARC10M & 10,000,000 & 1024 & 1,000 & 2D/3D/4D Uniform \\
Deep100M & 100,000,000 & 96 & 1,000 & 2D/3D Uniform \\
\bottomrule
\end{tabular}
\end{footnotesize}
\end{table}

\stitle{Baselines.}
We compare against the following state-of-the-art methods:
\begin{itemize}[leftmargin=*,topsep=2pt,itemsep=1pt]
  \item $\POST$: $\HNSW$ with post-filtering, which applies the filter predicate after retrieving candidates from a standard HNSW index.
  \item $\ACORN$-$\gamma$: A method that constructs a dense graph index for the entire dataset. We use $\gamma=12$ as recommended for filtered search.
  \item \textbf{iRangeGraph}~\cite{iRangeGraph-SIGMOD-2025}: A range-filtering ANNS method that organizes graph indices in a segment-tree-like structure over a 1D sorted attribute. iRangeGraph supports 2D metadata filter by allowing out-of-range neighbors during search.
  \item \textbf{Tree-Graph}: A multi-dimensional tree-graph hybrid (representative of the family discussed in Section~\ref{sec:problem}) that organizes per-node graph indices in a KD-tree over the metadata space. A query traverses the tree to identify overlapping leaves, runs a graph search on each, and merges the results.
\end{itemize}

\stitle{Evaluation Metrics.}
We evaluate the performance of our method using the following metrics:
\begin{itemize}[leftmargin=*,topsep=2pt,itemsep=1pt]
  \item \textbf{Recall}: For a given query, let \( R \) be the set of the exact \( k \)-nearest neighbors (ground truth) and \( A \) be the set of \( k \) neighbors returned by the approximate search. Recall is defined as \( {|R \cap A|}/{k} \).
  \item \textbf{Query Per Second (Qps)}: The number of queries processed per second.
\end{itemize}
Note, all metrics are averaged over the entire query set for each dataset and filter configuration.

\stitle{Metadata Distribution.}
For SIFT and MSMARC10M, we generate synthetic metadata attributes uniformly distributed in $[0,1]^m$ for $m=2,3,4$. For YFCC, we use real geolocation metadata (latitude/longitude) and timestamp. For Deep100M, we generate synthetic 2D/3D uniform spatial attributes to test scalability.
\begin{figure}[!t]
    \centering
    \includegraphics[width=0.6\columnwidth]{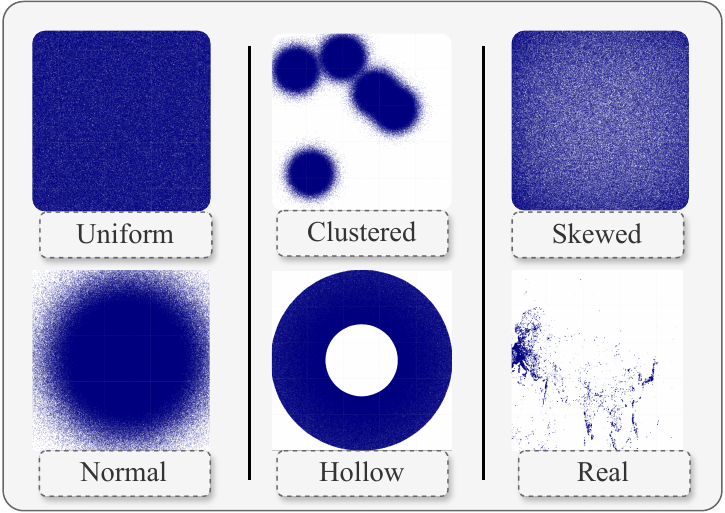}
    \vgap\caption{Distribution of Metadata Attributes Across Datasets.}\vgap
    \label{fig:meta-distribution}
\end{figure}

\stitle{Query Workloads.}
We design query workloads with varying filter shapes and sizes:
\begin{itemize}[leftmargin=*,topsep=2pt,itemsep=1pt]
  \item \textbf{Axis-Aligned Bounding Boxes}: Rectangular filters with varying edge length with a fluctuation of approximately 20\%.
  \item \textbf{Circles}: Circular filters with varying radius.
  \item \textbf{Polygons}: Irregular filters defined by random polygons with 3-5 vertices.
  \item \textbf{Compose}: Complex filters formed by combining basic shapes (e.g., points inside a bounding box but outside a circle).
\end{itemize}      

\stitle{Filter Ratios.}
We vary the filter ratio (the fraction of volume to the metaspace) from 0.01 to 0.10 to evaluate performance under different selectivity levels. If metadata is uniformly distributed, the filter ratio directly corresponds to the expected fraction of points satisfying the filter. For specific or real data, the selectivity will be less than the filter ratio. 
For synthetic filters, we control the filter ratio by adjusting the size of the filter (e.g., side length for bounding boxes, radius for circles).

\input{figure/Exp-1}

\input{figure/Exp-2}

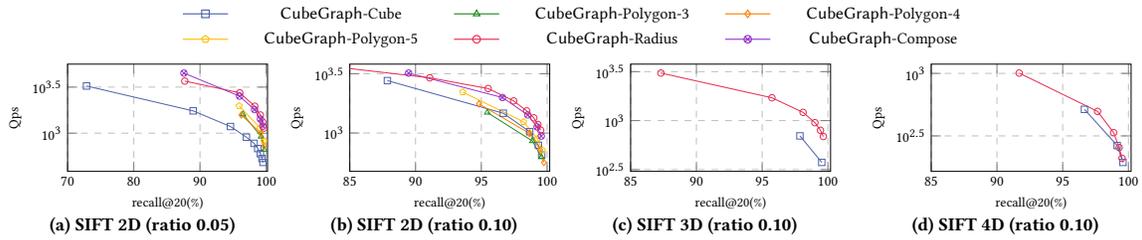
\begin{figure*}[!t]
\centering
\begin{footnotesize}
\begin{tikzpicture}
    \begin{customlegend}[legend columns=3,
        legend entries={$\CG$-Cube,$\CG$-Polygon-3,$\CG$-Polygon-4,$\CG$-Polygon-5,$\CG$-Radius, $\CG$-Compose},
        legend style={at={(0.5,1.15)},anchor=north,draw=none,font=\scriptsize,column sep=0.4cm}]
    \addlegendimage{line width=0.15mm,color=navy,mark=square,mark size=0.5mm}
    \addlegendimage{line width=0.15mm,color=forestgreen,mark=triangle,mark size=0.5mm}
    \addlegendimage{line width=0.15mm,color=orange,mark=diamond,mark size=0.5mm}
    \addlegendimage{line width=0.15mm,color=amber,mark=pentagon,mark size=0.5mm}
    \addlegendimage{line width=0.15mm,color=amaranth,mark=o,mark size=0.5mm}
    \addlegendimage{line width=0.15mm,color=violet,mark=otimes,mark size=0.5mm}
    \end{customlegend}
\end{tikzpicture}

\subfloat[SIFT 2D (ratio 0.05)]{\vgap
\begin{tikzpicture}[scale=0.85]
\begin{axis}[
    height=\columnwidth/2.60,
    width=\columnwidth/1.80,
    xlabel=recall@20(\%),
    ylabel=Qps,
    ymode=log,
    xmin=70,
    xmax=100.2,
    label style={font=\scriptsize},
    tick label style={font=\scriptsize},
    title style={font=\scriptsize},
    ymajorgrids=true,
    xmajorgrids=true,
    grid style=dashed,
]
\addplot[line width=0.15mm,color=navy,mark=square,mark size=0.5mm]
plot coordinates {
    (72.86, 3252.27)
    (88.96, 1745.92)
    (94.60, 1180.46)
    (97.01, 903.60)
    (98.17, 775.26)
    (98.73, 680.41)
    (99.12, 597.86)
    (99.36, 531.44)
    (99.52, 481.10)
};
\addplot[line width=0.15mm,color=forestgreen,mark=triangle,mark size=0.5mm]
plot coordinates {
    (96.49, 1601.97)
    (99.24, 920.62)
    (99.72, 662.67)
};
\addplot[line width=0.15mm,color=orange,mark=diamond,mark size=0.5mm]
plot coordinates {
    (96.25, 1576.95)
    (99.18, 1000.01)
    (99.69, 729.50)
};
\addplot[line width=0.15mm,color=amber,mark=pentagon,mark size=0.5mm]
plot coordinates {
    (95.92, 1978.12)
    (99.08, 1139.69)
    (99.66, 818.46)
};
\addplot[line width=0.15mm,color=amaranth,mark=o,mark size=0.5mm]
plot coordinates {
(87.70, 3677.87)
(96.01, 2741.80)
(98.27, 1962.34)
(99.12, 1574.56)
(99.48, 1320.29)
(99.68, 1136.55)
};
\addplot[line width=0.15mm,color=violet,mark=otimes,mark size=0.5mm]
plot coordinates {
(87.57, 4493.37)
(95.98, 2520.59)
(98.24, 1801.58)
(99.13, 1422.69)
(99.51, 1191.04)
};
\end{axis}
\end{tikzpicture}
}
\subfloat[SIFT 2D (ratio 0.10)]{\vgap
\begin{tikzpicture}[scale=0.85]
\begin{axis}[
    height=\columnwidth/2.60,
    width=\columnwidth/1.80,
    xlabel=recall@20(\%),
    ylabel=Qps,
    ymode=log,
    xmin=85,
    xmax=100.2,
    label style={font=\scriptsize},
    tick label style={font=\scriptsize},
    title style={font=\scriptsize},
    ymajorgrids=true,
    xmajorgrids=true,
    grid style=dashed,
]
\addplot[line width=0.15mm,color=navy,mark=square,mark size=0.5mm]
plot coordinates {
    (87.86, 2767.05)
    (96.67, 1469.00)
    (98.68, 1025.32)
    (99.34, 788.62)
    (99.59, 652.26)
};
\addplot[line width=0.15mm,color=forestgreen,mark=triangle,mark size=0.5mm]
plot coordinates {
(95.46, 1489.54)
(98.88, 856.54)
(99.57, 626.90)
};
\addplot[line width=0.15mm,color=orange,mark=diamond,mark size=0.5mm]
plot coordinates {
(94.82, 1762.35)
(98.65, 993.84)
(99.48, 721.84)
(99.76, 565.17)
};
\addplot[line width=0.15mm,color=amber,mark=pentagon,mark size=0.5mm]
plot coordinates {
(93.59, 2212.77)
(98.22, 1240.81)
(99.27, 895.06)
(99.65, 714.24)
};
\addplot[line width=0.15mm,color=violet,mark=otimes,mark size=0.5mm]
plot coordinates {
(89.47, 3203.94)
(96.61, 1992.31)
(98.53, 1420.86)
(99.24, 1126.56)
(99.56, 941.64)
};
\addplot[line width=0.15mm,color=amaranth,mark=o,mark size=0.5mm]
plot coordinates {
(78.29, 4261.73)
(91.08, 2923.10)
(95.52, 2372.51)
(97.46, 1866.56)
(98.45, 1537.91)
(98.99, 1342.59)
(99.30, 1180.66)
(99.51, 1057.98)
};
\end{axis}
\end{tikzpicture}
}
\subfloat[SIFT 3D (ratio 0.10)]{\vgap
\begin{tikzpicture}[scale=0.85]
\begin{axis}[
    height=\columnwidth/2.60,
    width=\columnwidth/1.80,
    xlabel=recall@20(\%),
    ylabel=Qps,
    ymode=log,
    xmin=85,
    xmax=100.2,
    label style={font=\scriptsize},
    tick label style={font=\scriptsize},
    title style={font=\scriptsize},
    ymajorgrids=true,
    xmajorgrids=true,
    grid style=dashed,
]
\addplot[line width=0.15mm,color=navy,mark=square,mark size=0.5mm]
plot coordinates {
    (97.87, 698.33)
    (99.53, 374.22)
};
\addplot[line width=0.15mm,color=amaranth,mark=o,mark size=0.5mm]
plot coordinates {
(87.30, 3057.50)
(95.74, 1710.16)
(98.11, 1213.15)
(99.01, 952.38)
(99.43, 795.89)
(99.65, 687.73)
};
\end{axis}
\end{tikzpicture}
}\hspace{2mm}
\subfloat[SIFT 4D (ratio 0.10)]{\vgap
\begin{tikzpicture}[scale=0.85]
\begin{axis}[
    height=\columnwidth/2.60,
    width=\columnwidth/1.80,
    xlabel=recall@20(\%),
    ylabel=Qps,
    ymode=log,
    xmin=85,
    xmax=100.2,
    label style={font=\scriptsize},
    tick label style={font=\scriptsize},
    title style={font=\scriptsize},
    ymajorgrids=true,
    xmajorgrids=true,
    grid style=dashed,
]
\addplot[line width=0.15mm,color=navy,mark=square,mark size=0.5mm]
plot coordinates {
    (96.65, 515.69)
    (99.14, 264.88)
    (99.58, 194.08)
};
\addplot[line width=0.15mm,color=amaranth,mark=o,mark size=0.5mm]
plot coordinates {
    (91.69, 1006.01)
    (97.65, 496.68)
    (98.87, 335.84)
    (99.29, 255.17)
    (99.51, 209.63)
};
\end{axis}
\end{tikzpicture}
}

\caption{Comparison of Polygon and Radius filters vs Cube filter on SIFT (recall@20 vs.\ Qps).}
\label{fig:Exp-3}\vspace{-3ex}
\end{footnotesize}
\end{figure*}

\subsection{Experimental Results}

\stitle{Exp-1: Search Efficiency.}
Fig~\ref{fig:Exp-1} compares $\CG$ against $\ACORN$ and $\POST$ across SIFT, MSMARC10M, and YFCC with varying 2D filter ratios (0.01--0.10). On SIFT, $\CG$ achieves up to 5,730 Qps at 92\% recall---$72\times$ and $21\times$ speedup over $\ACORN$ and $\POST$, respectively. On MSMARC10M, $\CG$ sustains 144 Qps at 99\%+ recall while $\POST$ drops to 15 Qps, with $\ACORN$ unable to exceed 88\% recall. YFCC shows the largest gap: $\CG$ delivers 100$\times$ speedup over $\POST$ at comparable recall, while $\ACORN$ saturates at only 24\% recall. Across all datasets, $\CG$ achieves $1$--$2$ orders of magnitude higher throughput than baselines.

\stitle{Exp-2: Multi-Dimensional Filters.}
We evaluate $\CG$ on queries with varying attribute dimensions (2D, 3D, 4D). Fig~\ref{fig:Exp-2} presents the performance on SIFT with a box filter at different filter ratios. With 2D attributes and 10\% filter ratio, $\CG$ achieves 2,767 Qps at 88\% recall@20 and 652 Qps at 99.6\% recall. Increasing dimensionality to 3D provides finer spatial filtering granularity, achieving 1,469 Qps at 97\% recall. At 4D, $\CG$ maintains 515 Qps at 98\% recall, demonstrating that higher dimensions slightly reduce throughput due to increased intersection complexity but still deliver excellent performance. These results confirm that $\CG$ efficiently handles multi-dimensional spatio-temporal filters.

\stitle{Exp-3: Handling Complex Filter.}
Fig~\ref{fig:Exp-3} evaluates $\CG$ with various filter shapes: box, polygon (3/4/5 vertices), radius, and composed filters on SIFT. At 2D ratio of 0.05, Polygon-5 achieves 1,978 Qps at 96\% recall, which is $1.8\times$ higher than Cube. This demonstrates that irregular filter shapes can reduce intersection overhead. Radius filter achieves 1,136 Qps at 99.7\% recall, while the composed filter (Inside Nox but not in Radius) reaches 1,191 Qps at 99.5\% recall. At 2D ratio of 0.10, Radius maintains 1,058 Qps at 99.5\% recall with Cube at 652 Qps. In 3D, Radius achieves 687 Qps at 99.7\% recall versus Cube 374 Qps---$1.8\times$ speedup. In 4D, both shapes perform comparably (209 vs 194 Qps at 99.5\% recall). These results confirm $\CG$ adapts efficiently to diverse filter geometries.

\stitle{Exp-4: Index Time and Space.}
Table~\ref{tab:index-time} reports the index construction time and space usage for $\CG$ and $\POST$ across four datasets ranging from 1M to 100M vectors. $\CG$ incurs moderate construction overhead compared to $\POST$ due to the hierarchical grid partitioning and cross-cube edge establishment. Our hierarchy terminates when cubes contain fewer than 50 nodes, ensuring sufficient points per leaf cube for effective graph navigation; empirically, 6 layers suffice for most datasets. However, this one-time construction cost is amortized over the entire query workload, and the resulting index structure enables dramatically faster query processing as demonstrated in Exp-1. The construction time scales linearly with dataset size, reaching approximately 5 hours for the Deep100M dataset, which is acceptable for offline index building. 
Note that both iRangeGraph and Tree-Graph fail to scale on the larger datasets in our suite: their index construction either exceeds a 24-hour wall-clock budget or terminates with a core dump (out-of-memory) on Deep100M. 

\begin{table}[!t]
\begin{footnotesize}
    \centering
    \caption{Indexing Time (seconds)}\label{tab:index-time}\vspace{-2ex}
    \begin{tabular}{l|ccccc}
        \toprule
        ~                 & \textbf{SIFT} & \textbf{YFCC} & \textbf{WIKI} & \textbf{MSMARC10M} & \textbf{Deep100M} \\
        \midrule
        $\CG$             & 88            & 251           & 397           & 4767               & 16224 \\
        $\POST$           & 47            & 50            & 58            & 647                & 2467  \\
        $\ACORN$-$\gamma$ & 49            & 133           & 180           & 3572               & 10484 \\
        $\KD$             & 153           & 328           & 556           & 12494              & --    \\
        $\IRANGE$         & 494           & 1007          & 3261          & 20192              & --    \\
        \bottomrule
    \end{tabular}
\end{footnotesize}
\end{table}

\begin{table}[!t]
\begin{footnotesize}
  \centering
  \caption{Index Size (MB)}\label{tab:index-size}\vspace{-2ex}
  \begin{tabular}{l|ccccc}
      \toprule
      ~                 & SIFT & WIKI   & YFCC   & MSMARC10M & Deep100M \\
      \midrule
      Dataset           & 489  & 1767   & 1956   & 39062     & 37004 \\
      $\CG$             & 1038 & 1215   & 1038   & 10376     & 102996 \\
      $\POST$           & 172  & 156    & 172    & 1716      & 17166   \\
      $\ACORN$-$\gamma$ & 363  & 2000   & 360    & 3638      & 36724  \\
      $\KD$             & 1541 & 1961   & 1540   & 20880     & -- \\
      $\IRANGE$         & 888  & 1228   & 873    & 11934     & -- \\
      \bottomrule
  \end{tabular}
\end{footnotesize}
\end{table}

\stitle{Exp-5: Dynamic Update.} Fig.~\ref{fig:dynamic-update} evaluates the efficiency of incremental updates against rebuild-from-scratch on SIFT (a) and YFCC (b), each with 100K initial vectors and 32 threads. Across update ratios 10\%--50\%, incremental update is $2.6\times$--$8.5\times$ faster on SIFT (0.7--3.5\,s vs.\ 6.2--9.1\,s) and $2.7\times$--$8.3\times$ faster on YFCC (2.0--9.5\,s vs.\ 16.3--25.7\,s). The speedup is largest at small update ratios because rebuild-from-scratch reprocesses the entire index regardless of how few points actually changed, whereas incremental update touches only the affected cubes. This validates that the hierarchical grid structure absorbs dynamic changes without index reconstruction.

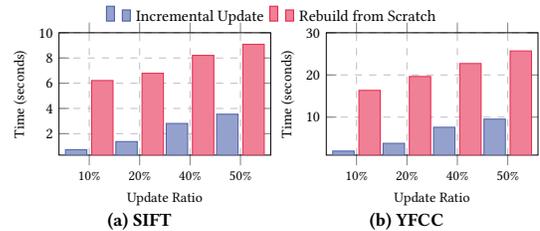
\begin{figure}[!t]
\centering
\begin{footnotesize}
\centerline{\pgfplotslegendfromname{exp5-legend}}\vgap
\subfloat[SIFT]{\vgap
\begin{tikzpicture}[scale=0.90]
\begin{axis}[
    ybar,
    bar width=9pt,
    height=\columnwidth/2.5,
    width=\columnwidth/1.8,
    xlabel={Update Ratio},
    ylabel={Time (seconds)},
    ymin=0.3,
    ymax=10,
    xtick=data,
    xticklabels={10\%,20\%,40\%,50\%},
    label style={font=\scriptsize},
    tick label style={font=\scriptsize},
    title style={font=\scriptsize},
    ymajorgrids=true,
    xmajorgrids=true,
    grid style=dashed,
    legend to name=exp5-legend,
    legend columns=2,
    legend style={
        draw=none,
        font=\scriptsize,
    },
    enlarge x limits=0.2,
]
\addplot[fill=navy!60,draw=navy,line width=0.15mm]
coordinates { (1,0.732) (2,1.374) (3,2.810) (4,3.542) };
\addplot[fill=amaranth!60,draw=amaranth,line width=0.15mm]
coordinates { (1,6.213) (2,6.786) (3,8.208) (4,9.086) };
\legend{Incremental Update, Rebuild from Scratch}
\end{axis}
\end{tikzpicture}
}
\subfloat[YFCC]{\vgap
\begin{tikzpicture}[scale=0.90]
\begin{axis}[
    ybar,
    bar width=9pt,
    height=\columnwidth/2.5,
    width=\columnwidth/1.8,
    xlabel={Update Ratio},
    ylabel={Time (seconds)},
    ymin=1,
    ymax=30,
    xtick=data,
    xticklabels={10\%,20\%,40\%,50\%},
    label style={font=\scriptsize},
    tick label style={font=\scriptsize},
    title style={font=\scriptsize},
    ymajorgrids=true,
    xmajorgrids=true,
    grid style=dashed,
    enlarge x limits=0.2,
]
\addplot[fill=navy!60,draw=navy,line width=0.15mm]
coordinates { (1,1.965) (2,3.796) (3,7.601) (4,9.530) };
\addplot[fill=amaranth!60,draw=amaranth,line width=0.15mm]
coordinates { (1,16.345) (2,19.637) (3,22.708) (4,25.695) };
\end{axis}
\end{tikzpicture}
}
\vgap\caption{Dynamic update time comparison on (a) SIFT and (b) YFCC. Both datasets use 100K initial vectors with 32 threads. Incremental update is consistently faster than full rebuild across all update ratios.}\vgap
\label{fig:dynamic-update}
\end{footnotesize}
\end{figure}

\stitle{Exp-6: Impact of Merge Number.}
We analyze how the number of merge indices affects query performance. Fig.~\ref{fig:recall-qps-merge-num} compares query performance for $4$, $16$, $64$, and $128$ cubes merged on the SIFT dataset. Merge-4 achieves the best performance, reaching over 99\% recall@20 at 478 Qps. As the merge count increases, both recall and throughput degrade significantly. Merge-128 attains only 1/10 search efficiency of Merge-4, demonstrating that excessive graph merging fragments the proximity structure and impairs navigation. These results validate our theoretical analysis that bounded merge counts (proportional to the filter's characteristic length) are essential for maintaining search efficiency.

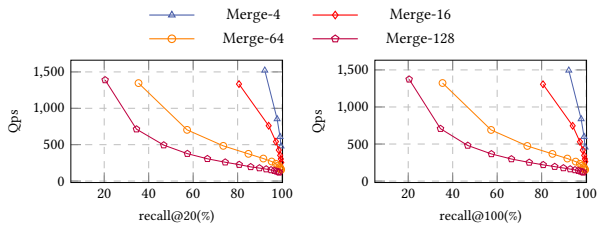
\begin{figure}[!t]
\centering
\begin{footnotesize}
\begin{tikzpicture}
    \begin{customlegend}[legend columns=2,
        legend entries={Merge-4,Merge-16,Merge-64,Merge-128},
        legend style={at={(0.5,1.15)},anchor=north,draw=none,font=\scriptsize,column sep=0.3cm}]
    \addlegendimage{line width=0.15mm,color=navy,mark=triangle,mark size=0.5mm}
    \addlegendimage{line width=0.15mm,color=red,mark=diamond,mark size=0.5mm}
    \addlegendimage{line width=0.15mm,color=orange,mark=o,mark size=0.5mm}
    \addlegendimage{line width=0.15mm,color=purple,mark=pentagon,mark size=0.5mm}
    \end{customlegend}
\end{tikzpicture}

\begin{tikzpicture}[scale=0.9]
\begin{axis}[
height=\columnwidth/2.50,
width=\columnwidth/1.80,
xlabel=recall@20(\%),
ylabel=Qps,
xmin=5,
xmax=100.2,
label style={font=\scriptsize},
tick label style={font=\scriptsize},
title style={font=\scriptsize},
title style={yshift=-2.5mm},
ymajorgrids=true,
xmajorgrids=true,
grid style=dashed,
]
\addplot[line width=0.15mm,color=navy,mark=triangle,mark size=0.5mm]
plot coordinates {
(92.21, 1521.08)
(97.76, 852.62)
(99.07, 608.35)
(99.54, 478.69)
};
\addplot[line width=0.15mm,color=red,mark=diamond,mark size=0.5mm]
plot coordinates {
(80.61, 1333.24)
(93.96, 760.15)
(97.23, 541.80)
(98.51, 425.69)
(99.11, 353.06)
(99.44, 303.50)
(99.63, 255.88)
};
\addplot[line width=0.15mm,color=orange,mark=o,mark size=0.5mm]
plot coordinates {
(35.43, 1346.21)
(57.27, 703.55)
(73.55, 483.69)
(84.85, 373.85)
(91.56, 309.23)
(95.30, 268.42)
(97.17, 237.92)
(98.25, 213.58)
(98.85, 194.19)
(99.19, 178.33)
(99.42, 164.91)
(99.55, 153.70)
};
\addplot[line width=0.15mm,color=purple,mark=pentagon,mark size=0.5mm]
plot coordinates {
(20.42, 1388.88)
(34.51, 713.16)
(46.77, 493.97)
(57.40, 376.40)
(66.41, 306.04)
(74.41, 258.65)
(80.71, 225.46)
(85.88, 194.81)
(89.79, 178.84)
(92.84, 163.98)
(95.02, 152.27)
(96.52, 142.34)
(97.55, 133.78)
(98.27, 126.33)
(98.74, 119.68)
};

\end{axis}
\end{tikzpicture}\hspace{0.5mm}
\begin{tikzpicture}[scale=0.9]
\begin{axis}[
height=\columnwidth/2.50,
width=\columnwidth/1.80,
xlabel=recall@100(\%),
ylabel=Qps,
xmin=5,
xmax=100.2,
label style={font=\scriptsize},
tick label style={font=\scriptsize},
title style={font=\scriptsize},
title style={yshift=-2.5mm},
ymajorgrids=true,
xmajorgrids=true,
grid style=dashed,
]
\addplot[line width=0.15mm,color=navy,mark=triangle,mark size=0.5mm]
plot coordinates {
(92.21, 1493.63)
(97.76, 837.57)
(99.07, 598.22)
(99.54, 456.37)
};
\addplot[line width=0.15mm,color=red,mark=diamond,mark size=0.5mm]
plot coordinates {
(80.61, 1306.53)
(93.96, 746.02)
(97.23, 529.15)
(98.51, 417.86)
(99.11, 346.52)
(99.44, 298.30)
(99.63, 262.74)
};
\addplot[line width=0.15mm,color=orange,mark=o,mark size=0.5mm]
plot coordinates {
(35.43, 1323.24)
(57.27, 690.24)
(73.55, 473.86)
(84.85, 366.72)
(91.56, 304.05)
(95.30, 262.85)
(97.17, 233.00)
(98.25, 209.14)
(98.85, 190.15)
(99.19, 174.58)
(99.42, 161.50)
(99.55, 150.47)
};
\addplot[line width=0.15mm,color=purple,mark=pentagon,mark size=0.5mm]
plot coordinates {
(20.42, 1373.39)
(34.51, 707.20)
(46.77, 480.44)
(57.40, 366.48)
(66.41, 297.98)
(74.41, 251.45)
(80.71, 219.21)
(85.88, 194.78)
(89.79, 176.75)
(92.84, 162.09)
(95.02, 150.13)
(96.52, 140.30)
(97.55, 132.01)
(98.27, 124.51)
(98.74, 118.01)
};

\end{axis}
\end{tikzpicture}\hspace{0.5mm}

\vgap\caption{Recall vs Qps comparison on SIFT dataset ($k=20$: left, $k=100$: right).
We compare the adjacent cube merge with different merge indices count on the SIFT dataset with 2D metadata.}\vgap
\label{fig:recall-qps-merge-num}
\end{footnotesize}
\end{figure}

\stitle{Exp-7: Scalability.}
We evaluate the scalability of $\CG$ on the Deep100M dataset containing 100M 96-dimensional vectors. Table~\ref{tab:index-time} shows that $\CG$ constructs the index in approximately 5 hours (18,508 seconds), demonstrating practical scalability to hundred-million-scale datasets. Fig~\ref{fig:deep100m-scalability} presents the recall@20 vs.\ Qps performance. Despite the massive dataset size, $\CG$ achieves 99.5\% recall at 250 Qps and maintains 99.5\% recall at 255 Qps, demonstrating that the hierarchical grid structure effectively bounds the search space regardless of dataset cardinality. This confirms $\CG$'s suitability for large-scale production deployments.

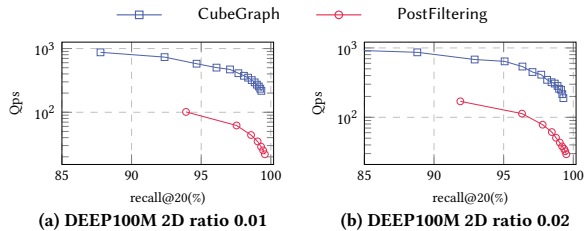
\begin{figure}[!t]
\centering
\begin{footnotesize}
\begin{tikzpicture}
    \begin{customlegend}[legend columns=2,
        legend entries={$\CG$,$\POST$},
        legend style={at={(0.5,1.15)},anchor=north,draw=none,font=\scriptsize,column sep=0.4cm}]
    \addlegendimage{line width=0.15mm,color=navy,mark=square,mark size=0.5mm}
    \addlegendimage{line width=0.15mm,color=amaranth,mark=o,mark size=0.5mm}
    \end{customlegend}
\end{tikzpicture}
\subfloat[DEEP100M 2D ratio 0.01]{\vgap
\begin{tikzpicture}[scale=0.9]
\begin{axis}[
height=\columnwidth/2.50,
width=\columnwidth/1.80,
xlabel=recall@20(\%),
ylabel=Qps,
ymode=log,
xmin=85,
xmax=100.2,
label style={font=\scriptsize},
tick label style={font=\scriptsize},
title style={font=\scriptsize},
ymajorgrids=true,
xmajorgrids=true,
grid style=dashed,
]
\addplot[line width=0.15mm,color=navy,mark=square,mark size=0.5mm]
plot coordinates {
(87.77, 872.56)
(92.37, 734.19)
(94.67, 578.03)
(96.10, 502.17)
(97.07, 470.04)
(97.67, 410.41)
(98.07, 373.71)
(98.37, 349.21)
(98.63, 319.90)
(98.84, 294.77)
(99.00, 271.34)
(99.14, 253.99)
(99.24, 234.07)
(99.32, 216.80)
};
\addplot[line width=0.15mm,color=amaranth,mark=o,mark size=0.5mm]
plot coordinates {
(93.91, 101.30)
(97.54, 62.03)
(98.58, 43.95)
(99.05, 34.89)
(99.30, 28.86)
(99.45, 25.23)
(99.57, 21.92)
};
\end{axis}
\end{tikzpicture}\hspace{2mm}
}
\subfloat[DEEP100M 2D ratio 0.02]{\vgap
\begin{tikzpicture}[scale=0.9]
\begin{axis}[
height=\columnwidth/2.50,
width=\columnwidth/1.80,
xlabel=recall@20(\%),
ylabel=Qps,
ymode=log,
xmin=85,
xmax=100.2,
label style={font=\scriptsize},
tick label style={font=\scriptsize},
title style={font=\scriptsize},
ymajorgrids=true,
xmajorgrids=true,
grid style=dashed,
]
\addplot[line width=0.15mm,color=navy,mark=square,mark size=0.5mm]
plot coordinates {
(78.46, 1011.70)
(88.80, 864.36)
(92.92, 681.00)
(95.07, 640.24)
(96.35, 538.96)
(97.07, 449.37)
(97.70, 411.39)
(98.12, 347.03)
(98.44, 320.70)
(98.67, 306.35)
(98.87, 283.45)
(98.98, 257.50)
(99.12, 249.66)
(99.21, 215.57)
(99.29, 189.10)
};
\addplot[line width=0.15mm,color=amaranth,mark=o,mark size=0.5mm]
plot coordinates {
(91.89, 169.91)
(96.32, 113.05)
(97.80, 78.28)
(98.46, 61.28)
(98.76, 50.75)
(99.03, 43.28)
(99.20, 38.05)
(99.34, 35.11)
(99.42, 32.47)
(99.50, 29.33)
};

\end{axis}
\end{tikzpicture}\hspace{2mm}
}
\vgap\caption{Scalability on Deep100M dataset (100M vectors).
Despite the massive dataset scale, \CG~maintains high search efficiency with Box Filter.}
\label{fig:deep100m-scalability}\vspace{-4ex}
\end{footnotesize}
\end{figure}

\stitle{Exp-8: Various Metadata Distributions.}
Fig.~\ref{fig:Exp-8} shows $\CG$ under five metadata distributions on SIFT 2D: Uniform, Normal, Clustered, Skewed, and Hollow. At a ratio of 0.05, Skewed achieves 1,207 Qps at 99.6\% recall, which is $2.5\times$ higher than Uniform distribution because concentrated data reduces the effective search space. Clustered data yields similar to Uniform (494 Qps). At a ratio of 0.10, Skewed maintains $1.4\times$ speedup, while Hollow performs comparably (847 Qps). Notably, Clustered degrades significantly at a higher ratio (313 Qps, half of Uniform), as dense clusters increase intra-cube competition. These results demonstrate $\CG$ adapts to diverse real-world data distributions, often benefiting from skewed and normal data while being robust to clustered distributions. These experiments also confirm that $\CG$'s stable performance is not solely dependent on uniform distribution, making it suitable for a wide range of applications with varying metadata characteristics.

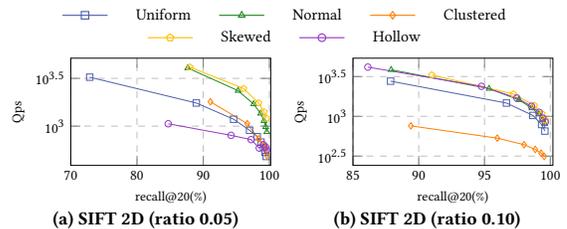
\begin{figure}[!t]
\centering
\begin{footnotesize}
\begin{tikzpicture}\vgap
    \begin{customlegend}[legend columns=3,
        legend entries={Uniform,Normal,Clustered},
        legend style={at={(0.5,1.15)},anchor=north,draw=none,font=\scriptsize,column sep=0.3cm}]
    \addlegendimage{line width=0.15mm,color=navy,mark=square,mark size=0.5mm}
    \addlegendimage{line width=0.15mm,color=forestgreen,mark=triangle,mark size=0.5mm}
    \addlegendimage{line width=0.15mm,color=orange,mark=diamond,mark size=0.5mm}
    \end{customlegend}
\end{tikzpicture}\vgap

\begin{tikzpicture}
    \begin{customlegend}[legend columns=2,
        legend entries={Skewed,Hollow},
        legend style={at={(0.5,1.15)},anchor=north,draw=none,font=\scriptsize,column sep=0.3cm}]
    \addlegendimage{line width=0.15mm,color=amber,mark=pentagon,mark size=0.5mm}
    \addlegendimage{line width=0.15mm,color=violet,mark=o,mark size=0.5mm}
    \end{customlegend}
\end{tikzpicture}

\subfloat[SIFT 2D (ratio 0.05)]{\vgap
\begin{tikzpicture}[scale=0.85]
\begin{axis}[
    height=\columnwidth/2.60,
    width=\columnwidth/1.80,
    xlabel=recall@20(\%),
    ylabel=Qps,
    ymode=log,
    xmin=70,
    xmax=100.2,
    label style={font=\scriptsize},
    tick label style={font=\scriptsize},
    title style={font=\scriptsize},
    ymajorgrids=true,
    xmajorgrids=true,
    grid style=dashed,
]
\addplot[line width=0.15mm,color=navy,mark=square,mark size=0.5mm]
plot coordinates {
    (72.86, 3252.27)
    (88.96, 1745.92)
    (94.60, 1180.46)
    (97.01, 903.60)
    (98.17, 775.26)
    (98.73, 680.41)
    (99.12, 597.86)
    (99.36, 531.44)
    (99.52, 481.10)
};
\addplot[line width=0.15mm,color=forestgreen,mark=triangle,mark size=0.5mm]
plot coordinates {
    (87.67, 4059.25)
    (95.27, 2351.14)
    (97.62, 1702.73)
    (98.68, 1358.27)
    (99.18, 1136.67)
    (99.45, 981.87)
    (99.61, 871.08)
};
\addplot[line width=0.15mm,color=orange,mark=diamond,mark size=0.5mm]
plot coordinates {
    (91.10, 1797.12)
    (96.66, 1060.11)
    (98.39, 737.34)
    (99.08, 630.81)
    (99.43, 586.41)
    (99.63, 493.71)
};
\addplot[line width=0.15mm,color=amber,mark=pentagon,mark size=0.5mm]
plot coordinates {
    (88.00, 4146.75)
    (96.11, 2471.78)
    (98.37, 1747.62)
    (99.20, 1409.55)
    (99.57, 1206.62)
};
\addplot[line width=0.15mm,color=violet,mark=o,mark size=0.5mm]
plot coordinates {
    (84.72, 1058.60)
    (94.21, 800.46)
    (97.24, 721.99)
    (98.47, 588.80)
    (99.09, 639.45)
    (99.41, 605.51)
    (99.61, 562.91)
};
\end{axis}
\end{tikzpicture}
}
\subfloat[SIFT 2D (ratio 0.10)]{\vgap
\begin{tikzpicture}[scale=0.85]
\begin{axis}[
    height=\columnwidth/2.60,
    width=\columnwidth/1.80,
    xlabel=recall@20(\%),
    ylabel=Qps,
    ymode=log,
    xmin=85,
    xmax=100.2,
    label style={font=\scriptsize},
    tick label style={font=\scriptsize},
    title style={font=\scriptsize},
    ymajorgrids=true,
    xmajorgrids=true,
    grid style=dashed,
]
\addplot[line width=0.15mm,color=navy,mark=square,mark size=0.5mm]
plot coordinates {
    (87.86, 2767.05)
    (96.67, 1469.00)
    (98.68, 1025.32)
    (99.34, 788.62)
    (99.59, 652.26)
};
\addplot[line width=0.15mm,color=forestgreen,mark=triangle,mark size=0.5mm]
plot coordinates {
    (87.90, 3853.44)
    (95.35, 2229.90)
    (97.62, 1609.27)
    (98.60, 1269.53)
    (99.13, 1061.29)
    (99.41, 912.75)
    (99.58, 807.94)
};
\addplot[line width=0.15mm,color=orange,mark=diamond,mark size=0.5mm]
plot coordinates {
    (89.40, 758.12)
    (95.97, 531.81)
    (98.01, 439.84)
    (98.86, 383.10)
    (99.30, 343.91)
    (99.54, 312.65)
};
\addplot[line width=0.15mm,color=amber,mark=pentagon,mark size=0.5mm]
plot coordinates {
    (90.99, 3306.55)
    (97.19, 1911.03)
    (98.79, 1370.40)
    (99.38, 1074.41)
    (99.64, 896.69)
};
\addplot[line width=0.15mm,color=violet,mark=o,mark size=0.5mm]
plot coordinates {
    (86.13, 4156.61)
    (94.77, 2382.76)
    (97.46, 1699.80)
    (98.57, 1339.18)
    (99.15, 1124.05)
    (99.43, 949.26)
    (99.62, 847.02)
};
\end{axis}
\end{tikzpicture}
}

\vgap\caption{Impact of metadata distribution on search efficiency.}\vgap
\label{fig:Exp-8}\vspace{-3ex}
\end{footnotesize}
\end{figure}

\section{Related Work}\label{sec:related}

\stitle{Label Filtered Vector Search Methods.}
The utilization of cross-graph indices to handle filtered queries was initially explored in the context of label filtering~\cite{ELI-VLDB-2026-Mingyu,SIEVE-VLDB-2026,Vbase-OSDI-2023,NHQ-NIPS-2022-mengzhao-wang,Filtered-diskann-WWW-2023,Beyond-Vector-Search-Jiadong-Xie-SIGMOD-2025,Tag-Filter-ICDE-2025-Dong-Deng}, as demonstrated by UNG~\cite{UNG-SIGMOD-2025}; however, the performance dynamics of these merged indices were not fully analyzed. UNG leverages the inclusion relationships among labels and dynamically activates cross-graph edges, ensuring that vectors satisfying the label constraints form a navigable graph index. Building upon this, UniFilter~\cite{Beyond-Vector-Search-Jiadong-Xie-SIGMOD-2025} extends the UNG approach to an automaton-based framework. By designing a navigation graph on top of the base index, UniFilter achieves a plug-and-play capability without altering or disrupting the original graph structure. 
In contrast to these methods, \(\CG\) addresses a fundamentally distinct set of problem scenarios. While UNG requires the vectors within the merged index to exactly match the query filter, \(\CG\) adopts a more flexible approach, enabling it to process significantly more complex spatio-temporal queries. Furthermore, the hierarchical structure of \(\CG\) ensures robust search performance across queries of varying granularities. These characteristics make \(\CG\) highly suitable for integration into modern spatio-temporal Retrieval-Augmented Generation systems.

\stitle{Numeric Filtered Vector Search Methods.}
Numeric filtering, which involves selecting vectors based on attributes like prices or timestamps~\cite{DIGRA-SIGMOD-2025-sibo-menxu-cuhk,RangePQ-SIGMOD-2025-fangyuan-sibo,SeRF-SIGMOD-2024,ESG-arxiv-2025-Mingyu,UNIFY-VLDB-2024-liang,Timestamp-ICDE-2025-Yongxin-Tong,Dynamic-RFANNS-VLDB-2025-Dong-Deng,Hi-PNG-KDD-2025-Ming-Yang}. Previous work SeRF~\cite{SeRF-SIGMOD-2024} use compress to reduce the $O(N^2)$ space for all possible range filters, but it is limited to one-dimensional attributes and does not support complex filter shapes. Segment-Tree-based methods~\cite{Window-Filter-ICML-2024,iRangeGraph-SIGMOD-2025,Wow-Range-SIGMOD-2025} build segment-tree-like tree features on top of the base graph index, but they are limited to one-dimensional attributes and do not support complex filter shapes. Hi-PNG~\cite{Hi-PNG-KDD-2025-Ming-Yang} use the similar hierarchical structure of $\CG$ but studies interval-filtering ANNS (IF-ANNS), where both base and query vectors are associated with numerical intervals. 
In contrast, \(\CG\) is designed to handle multi-dimensional spatio-temporal filters with complex geometries, making it more versatile for a wider range of applications. 


\section{Conclusion}\label{sec:conclusion}
In this paper, we presented $\CG$, a highly efficient hierarchical grid index that addresses the core challenges of filtered approximate nearest neighbor search through dynamic graph stitching. By combining a multi-level grid structure with lightweight cross-cube edges, $\CG$ successfully overcomes the trade-off between bounding query intersections and preserving global routing. This architectural advantage translates to significant speedups over existing state-of-the-art baselines while strictly maintaining high recall. Future work includes exploring adaptive grid partitioning strategies to further optimize performance for skewed data distributions and extending the framework to support more complex filter types, such as non-convex shapes or learned filters based on query intent.

\balance

\bibliographystyle{ACM-Reference-Format}
\bibliography{sample-base}

\clearpage

\section*{Appendix}

\subsection*{A.1 Proof of Proposition~\ref{prop:optimal-layer} (Optimal Layer Selection)}

We establish the cube-count bound and then justify the optimality of $w_{\ell^*} \approx r$ against alternative layer choices.

\begin{proof}
\noindent\textbf{Part 1 (Bounding-box decomposition).} Since $\phi$ has characteristic length $r$, its axis-aligned bounding box $B = [b_1^-, b_1^+] \times \cdots \times [b_m^-, b_m^+]$ has side lengths $a_i = b_i^+ - b_i^- \le r$ for every dimension $i$. Any cube intersecting $\phi$ also intersects $B$, so it suffices to bound the cubes intersecting $B$.

\noindent\textbf{Part 2 (Per-dimension cube count).} At layer $\ell^*$ with $r/2 < w_{\ell^*} \le r$, a 1D segment of length $a_i \le r$ touches at most $\lceil a_i / w_{\ell^*} \rceil + 1$ cubes. Since $a_i / w_{\ell^*} < r / (r/2) = 2$, we have $\lceil a_i / w_{\ell^*} \rceil \le 2$, so each dimension touches at most $3$ cubes.

\noindent\textbf{Part 3 (Total cube count).} By independence across dimensions, the bounding box $B$ intersects at most $\prod_{i=1}^{m} n_i \le 3^m$ cubes, establishing the bound stated in Proposition~\ref{prop:optimal-layer}.

\noindent\textbf{Part 4 (Optimality of $w_{\ell^*} \approx r$).} We show that any other layer choice degrades either the cube count or the elastic factor.
\begin{itemize}[leftmargin=*,topsep=2pt,itemsep=1pt]
    \item \emph{Case 1 ($w_\ell \gg r$, cubes too large).} The filter intersects $O(1)$ cubes, so the cube-count bound holds trivially. However, each cube contains $O(N \cdot w_\ell^m / S^m)$ points under uniform distribution, while the filter covers only $O(r^m)$ volume, so $e \approx (r/w_\ell)^m \ll 1$. The search inflates beyond the filter's relevant region.
    \item \emph{Case 2 ($w_\ell \ll r$, cubes too small).} The filter intersects $O((r/w_\ell)^m)$ cubes, which grows unboundedly as $r/w_\ell$ increases. The merged graph becomes fragmented: more cross-cube edges to traverse, weaker small-world connectivity, and less effective beam-search pruning. This degradation is empirically validated in Exp-6 (Section~\ref{sec:exp}), where Merge-128 achieves only $\sim$1/10 the throughput of Merge-4 at the same recall.
    \item \emph{Case 3 (high-aspect-ratio rectangles, $\alpha = r_{\max}/r_{\min} \gg 1$).} With $w_{\ell^*} \approx r = r_{\max}$, the cube-count bound $3^m$ still holds (the long dimension contributes at most $3$ cubes; each shorter dimension contributes at most $2$ cubes since $r_{\min} < w_{\ell^*}$). However, the filter volume $r \cdot r_{\min}^{m-1} = r^m / \alpha^{m-1}$ is much smaller than the cubes' union volume $\Theta(r^m)$, so $e = \Theta(1/\alpha^{m-1})$ and query performance degrades polynomially in $\alpha$.
\end{itemize}

Therefore, $w_{\ell^*}$ with $r/2 < w_{\ell^*} \le r$ is the layer choice that yields the bounded cube count $3^m$ together with a constant elastic-factor lower bound (Lemma~\ref{lem:elastic-factor-bound}).
\end{proof}

\stitle{Limitations.} For high-aspect-ratio filters ($\alpha \gg 1$), Proposition~\ref{prop:optimal-layer} continues to bound the cube count by $3^m$, but the elastic factor decays as $\Theta(1/\alpha^{m-1})$ (Case 3 above), so query latency grows polynomially with $\alpha$. \CG{} does not split filters or escalate layers as a recovery mechanism; this performance loss for high-$\alpha$ workloads is a known limitation of the framework, quantified empirically in technical report~A.5.

\subsection*{A.2 Proof of Lemma~\ref{lem:elastic-factor-bound} (Elastic Factor Bound)}

The filter with characteristic length $r$ intersects at most $3^m$ cubes (Proposition~\ref{prop:optimal-layer}), each with volume $w_{\ell^*}^m \approx r^m$. The union of cubes has volume at most $3^m \cdot r^m$. For a circular filter with diameter $r$, the volume is $\pi^{m/2} \cdot (r/2)^m / \Gamma(m/2 + 1)$. Therefore, $e \ge \pi^{m/2} \cdot (r/2)^m / (\Gamma(m/2 + 1) \cdot 3^m \cdot r^m) = \pi^{m/2} / (6^m \cdot \Gamma(m/2 + 1))$. For $m=2$, $e \ge \pi/36 \approx 0.087$. The lower bound is a positive constant independent of $N$.

\subsection*{A.3 Proof of Theorem~\ref{thm:query-time} (Query Time Complexity)}

The query processing consists of three phases:

\noindent\textbf{Phase 1: Layer selection.} We select the optimal layer $\ell^*$ by binary search over $L$ layers, comparing $w_\ell$ with the characteristic length $r$. This costs $O(\log L)$ time.

\noindent\textbf{Phase 2: Cube identification.} At the optimal layer $\ell^*$, we identify all cubes intersecting the filter $\phi$. By Proposition~\ref{prop:optimal-layer}, there are at most $3^m$ such cubes, a constant independent of dataset size $N$ for fixed dimensionality $m$.

\noindent\textbf{Phase 3: Graph search.} We perform beam search on the merged graph $\mathcal{G}^*$ formed by the $O(1)$ intersecting cubes. Locating the top-1 neighbor costs $O(C)$, where $C$ depends on the graph structure and search parameters. For each additional result, we visit amortized $O(1/c)$ candidates because at least a fraction $c$ of visited neighbors satisfy the filter (due to the elastic factor bound $e \ge c$). Retrieving $k-1$ additional results costs $O(k/c)$.

The total query time is $O(\log L + C + k/c) = O(C + k/c)$ since $C$ dominates for typical values of $L$ and $k$.

\subsection*{A.4 Detailed Discussion: Query Performance Degradation}

The $O(1)$ bound on the number of merged cubes (Proposition~\ref{prop:optimal-layer}) is critical for achieving the $O(C + k/c)$ query time. When $w_\ell \ll r$, the number of merged cubes grows as $O((r/w_\ell)^m)$, which causes significant performance degradation:

\begin{itemize}[leftmargin=*,topsep=2pt,itemsep=1pt]
\item \textbf{More cross-cube edges to traverse:} Each additional cube introduces $O(1)$ cross-cube edges per boundary node, increasing the search space.
\item \textbf{Reduced graph navigability:} Merging many small cube graphs creates a fragmented structure where the small-world property of graph indices degrades.
\item \textbf{Beam search inefficiency:} With a larger, more fragmented search space, beam search becomes less effective at pruning irrelevant candidates.
\end{itemize}

Our experiments (Section~\ref{sec:exp}) validate this analysis by showing that query latency increases significantly as the number of merged cubes grows beyond $3^m$. By selecting the optimal layer where $w_{\ell^*} \approx r$, \CG{} maintains a small constant number of merged cubes, preserving high graph search performance. In $m=2$ this corresponds to at most $9$ merged cubes; growing the merged-graph count to Merge-128 yields roughly $1/10$ the throughput at the same recall (Exp-6), motivating the $w_{\ell^*} \approx r$ rule.

\noindent\textbf{Impact of aspect ratio.} For rectangular filters with high aspect ratio $\alpha = r_{\max} / r_{\min}$, the cube count remains bounded by $3^m$, so graph fragmentation is not the bottleneck. Instead, the elastic factor degrades as $e = \Theta(1/\alpha^{m-1})$, since the filter volume $r \cdot r_{\min}^{m-1} = r^m / \alpha^{m-1}$ shrinks while the cubes' union volume stays $\Theta(r^m)$. Query time therefore grows linearly with $\alpha$ in 2D, since $e \approx 1/(6\alpha)$. Experimental validation (Section~\ref{sec:exp}) shows that query latency increases significantly for high aspect ratio filters.

\subsection*{A.5 Example: High Aspect Ratio Rectangle}

Consider a 2D rectangular filter with sides $100 \times 10$ (aspect ratio $\alpha = 10$) in a metadata space of side length $S = 1000$. With optimal layer selection where $w_{\ell^*} \approx 100$, the rectangle intersects at most $3 \times 2 = 6$ cubes ($3$ along the long dimension, $2$ along the short dimension since $r_{\min} = 10 < w_{\ell^*}$). The rectangle has area $1000$, and the union of $6$ cubes has area approximately $6 \times 100^2 = 60{,}000$. The elastic factor is $e \approx 1000 / 60{,}000 \approx 0.017$, much lower than the $\pi/36 \approx 0.087$ bound for circular filters at the same layer. Although the cube count remains a small constant, the shrinking elastic factor explains why high aspect ratio filters lead to poor query performance in \CG{}.

\begin{figure}[!t]
\centering
\begin{footnotesize}
\begin{tikzpicture}\vgap
    \begin{customlegend}[legend columns=4,
        legend entries={$\alpha=1$,$\alpha=4$,$\alpha=16$,$\alpha=32$},
        legend style={at={(0.5,1.15)},anchor=north,draw=none,font=\scriptsize}]
    \addlegendimage{line width=0.15mm,color=navy,mark=square,mark size=0.5mm}
    \addlegendimage{line width=0.15mm,color=amaranth,mark=o,mark size=0.5mm}
    \addlegendimage{line width=0.15mm,color=orange,mark=triangle,mark size=0.5mm}
    \addlegendimage{line width=0.15mm,color=forestgreen,mark=diamond,mark size=0.5mm}
    \end{customlegend}
\end{tikzpicture}
\\
\subfloat[SIFT 2D(filter ratio 0.1)]{\vgap
\begin{tikzpicture}[scale=0.85]
\begin{axis}[
    height=\columnwidth/2.60,
    width=\columnwidth/1.80,
    xlabel=recall@20(\%),
    ylabel=Qps,
    ymode=log,
    xmin=85,
    xmax=100.2,
    label style={font=\scriptsize},
    tick label style={font=\scriptsize},
    title style={font=\scriptsize},
    ymajorgrids=true,
    xmajorgrids=true,
    grid style=dashed,
]
\addplot[line width=0.15mm,color=navy,mark=square,mark size=0.5mm]
plot coordinates {
    (87.03, 4232.83)
    (95.36, 2464.99)
    (97.88, 1775.91)
    (98.85, 1388.96)
    (99.31, 1162.23)
    (99.57, 1000.76)
};
\addplot[line width=0.15mm,color=amaranth,mark=o,mark size=0.5mm]
plot coordinates {
    (92.72, 2480.41)
    (97.88, 1423.48)
    (99.13, 1015.06)
    (99.56, 798.07)
};
\addplot[line width=0.15mm,color=orange,mark=triangle,mark size=0.5mm]
plot coordinates {
    (93.97, 2273.80)
    (98.40, 1283.75)
    (99.35, 855.40)
    (99.67, 725.14)
};
\addplot[line width=0.15mm,color=forestgreen,mark=diamond,mark size=0.5mm]
plot coordinates {
    (95.51, 1711.42)
    (98.90, 960.64)
    (99.59, 684.47)
};
\end{axis}
\end{tikzpicture}
}
\hspace{2mm}
\subfloat[YFCC 2D(filter ratio 0.1)]{\vgap
\begin{tikzpicture}[scale=0.85]
\begin{axis}[
    height=\columnwidth/2.60,
    width=\columnwidth/1.80,
    xlabel=recall@20(\%),
    ylabel=Qps,
    ymode=log,
    xmin=95,
    xmax=100.2,
    label style={font=\scriptsize},
    tick label style={font=\scriptsize},
    title style={font=\scriptsize},
    ymajorgrids=true,
    xmajorgrids=true,
    grid style=dashed,
]
\addplot[line width=0.15mm,color=navy,mark=square,mark size=0.5mm]
plot coordinates {
    (95.77, 364.89)
    (98.57, 245.07)
    (99.39, 224.20)
    (99.64, 197.50)
};
\addplot[line width=0.15mm,color=amaranth,mark=o,mark size=0.5mm]
plot coordinates {
    (97.20, 84.56)
    (99.17, 68.97)
    (99.63, 59.39)
};
\addplot[line width=0.15mm,color=orange,mark=triangle,mark size=0.5mm]
plot coordinates {
    (98.07, 15.13)
    (99.40, 11.70)
    (99.75, 10.38)
};
\addplot[line width=0.15mm,color=forestgreen,mark=diamond,mark size=0.5mm]
plot coordinates {
    (98.11, 13.65)
    (99.45, 10.91)
    (99.76, 9.74)
};
\end{axis}
\end{tikzpicture}
}

\vgap\caption{Impact of filter aspect ratio on CubeGraph search efficiency (filter ratio 0.1). As aspect ratio increases, QPS drops dramatically while recall stays high due to the degrade of elastic factor.}
\label{fig:high-aspect-ratio}\vspace{-3ex}
\end{footnotesize}
\end{figure}
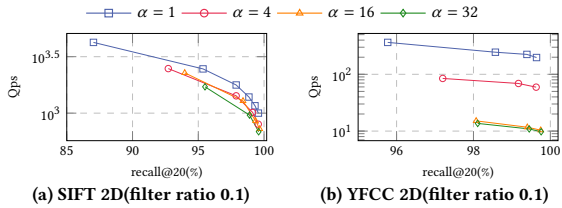

Fig.~\ref{fig:high-aspect-ratio} empirically validates this analysis on SIFT and YFCC (2D, filter ratio 0.1, 32 threads) for aspect ratios $\alpha \in \{1, 4, 16, 32\}$. On SIFT, throughput at 99.5\% recall@20 falls from $\sim$1000 Qps at $\alpha = 1$ to $\sim$684 Qps at $\alpha = 32$ (a $1.5\times$ slowdown). On YFCC, the effect is far more pronounced: $\sim$197 Qps at $\alpha = 1$ collapses to $\sim$9.7 Qps at $\alpha = 32$, a $20\times$ slowdown. The dataset gap is consistent with our theoretical model: the elastic factor decays as $\Theta(1/\alpha^{m-1})$, but the constant depends on the metadata distribution. Recall stays high in both cases, confirming that the slowdown comes from the elastic factor, not from the graph search structure.

\subsection*{A.6 Fly-Merge vs Cube-Merge}
\begin{figure}[!t]
\centering
\begin{footnotesize}
\begin{tikzpicture}
    \begin{customlegend}[legend columns=2,
        legend entries={Cube-merge4,Fly-Merge4},
        legend style={at={(0.5,1.15)},anchor=north,draw=none,font=\scriptsize,column sep=0.5cm}]
    \addlegendimage{line width=0.15mm,color=navy,mark=triangle,mark size=0.5mm}
    \addlegendimage{line width=0.15mm,color=forestgreen,mark=square,mark size=0.5mm}
    \end{customlegend}
\end{tikzpicture}

\begin{tikzpicture}[scale=0.9]
\begin{axis}[
height=\columnwidth/2.50,
width=\columnwidth/1.80,
xlabel=recall@20(\%),
ylabel=Qps,
xmin=90,
xmax=100.2,
label style={font=\scriptsize},
tick label style={font=\scriptsize},
title style={font=\scriptsize},
title style={yshift=-2.5mm},
ymajorgrids=true,
xmajorgrids=true,
grid style=dashed,
]
\addplot[line width=0.15mm,color=navy,mark=triangle,mark size=0.5mm]
plot coordinates {
(92.21, 1521.08)
(97.76, 852.62)
(99.07, 608.35)
(99.54, 478.69)
};
\addplot[line width=0.15mm,color=forestgreen,mark=square,mark size=0.5mm]
plot coordinates {
(90.82, 1413.12)
(97.31, 715.08)
(98.88, 561.51)
(99.44, 441.75)
(99.68, 367.10)
};

\end{axis}
\end{tikzpicture}\hspace{0.5mm}
\begin{tikzpicture}[scale=0.9]
\begin{axis}[
height=\columnwidth/2.50,
width=\columnwidth/1.80,
xlabel=recall@100(\%),
ylabel=Qps,
xmin=90,
xmax=100.2,
label style={font=\scriptsize},
tick label style={font=\scriptsize},
title style={font=\scriptsizes},
title style={yshift=-2.5mm},
ymajorgrids=true,
xmajorgrids=true,
grid style=dashed,
]
\addplot[line width=0.15mm,color=navy,mark=triangle,mark size=0.5mm]
plot coordinates {
(92.21, 1493.63)
(97.76, 837.57)
(99.07, 598.22)
(99.54, 456.37)
};
\addplot[line width=0.15mm,color=forestgreen,mark=square,mark size=0.5mm]
plot coordinates {
(90.82, 1320.04)
(97.31, 790.96)
(98.88, 565.50)
(99.44, 428.87)
(99.68, 369.72)
};

\end{axis}
\end{tikzpicture}\hspace{0.5mm}

\caption{Recall vs Qps comparison on SIFT dataset ($k=20$: left, $k=100$: right).
We compare hnsw-cube-merge4 and hnsw-fly-merge4 on the SIFT dataset.}
\label{fig:AExp-1}\vspace{-4ex}
\end{footnotesize}
\end{figure}
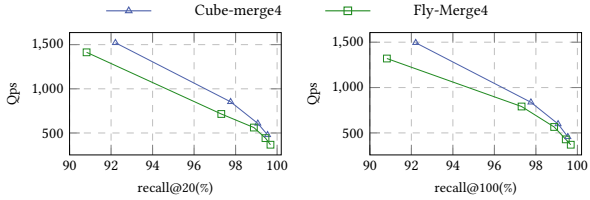
We evaluate the effectiveness of our two graph merging strategies: \emph{Fly-Merge} (on-the-fly dynamic discovery) and \emph{Cube-Merge} (predetermined cube identification) with 4 cube graph indices merged. Fig.~\ref{fig:AExp-1} presents the performance on the SIFT dataset with $k=20$ (left) and $k=100$ (right). Cube-Merge consistently outperforms Fly-Merge across all recall levels, achieving up to $1.4\times$ higher throughput at comparable recall. This advantage stems from Cube-Merge's upfront cube identification, which eliminates the dynamic discovery overhead during search. Fly-Merge incurs additional predicate evaluations and bitmap updates for each discovered cube, resulting in lower throughput despite its flexibility for complex filter shapes.

\end{document}